\newcommand{\mbP}{\mathbb P}
\newcommand{\mbZ}{\mathbb Z}
\newcommand{\mbC}{\mathbb C}
\newcommand{\cP}{\mathcal P}
\newcommand{\oM}{\overline{\mathcal M}}
\newcommand{\tu}{{\widetilde u}}
\newcommand{\og}{\overline g}
\newcommand{\oh}{\overline h}
\newcommand{\hLambda}{\widehat\Lambda}
\def\oM{{\overline{\mathcal{M}}}}
\def\CP{{{\mathbb C}{\mathbb P}}}
\renewcommand{\Im}{\mathrm{Im}}
\def\d{{\partial}}
\newcommand{\eps}{\varepsilon}
\newcommand{\cA}{\mathcal A}
\newcommand{\hcA}{\widehat{\mathcal A}}
\newcommand{\DR}{\mathrm{DR}}
\newcommand{\DZ}{\mathrm{DZ}}
\newcommand{\even}{\mathrm{even}}
\newcommand{\cF}{\mathcal F}
\newcommand{\Coef}{\mathrm{Coef}}
\newcommand{\Deg}{\mathrm{Deg}}
\newcommand{\tv}{\widetilde v}
\renewcommand{\top}{\mathrm{top}}
\newcommand{\red}{\mathrm{red}}
\newcommand{\gl}{\mathrm{gl}}
\newcommand{\mcF}{\mathcal{F}}
\newcommand{\cL}{\mathcal{L}}
\newcommand{\un}{{1\!\! 1}}
\newcommand{\cPO}{\mathcal{PO}}
\newcommand{\rt}{\mathrm{rt}}
\renewcommand{\t}{\mathrm{t}}
\newtheorem{theorem}{Theorem}[section]
\newtheorem{lemma}[theorem]{Lemma}
\newtheorem{conjecture}{Conjecture}
\theoremstyle{definition}
\newtheorem{definition}[theorem]{Definition}
\newtheorem{remark}[theorem]{Remark}
\newtheorem{convention}[theorem]{Convention}
\numberwithin{equation}{section}
\begin{document}

\title[The DR/DZ hierarchies at the approximation up to genus one]{The bihamiltonian structures of the DR/DZ hierarchies at the approximation up to genus one}

\author{Oscar Brauer}
\address{O. Brauer:\newline
School of Mathematics, University of Leeds, \newline
Leeds, LS2 9JT, United Kingdom}
\email{mmobg@leeds.ac.uk}

\author{Alexandr Buryak}
\address{A. Buryak:\newline 
Faculty of Mathematics, National Research University Higher School of Economics, \newline
6 Usacheva str., Moscow, 119048, Russian Federation;\smallskip\newline 
Center for Advanced Studies, Skolkovo Institute of Science and Technology, \newline
1 Nobel str., Moscow, 143026, Russian Federation}
\email{aburyak@hse.ru}

\begin{abstract}
In a recent paper, giving an arbitrary homogeneous cohomological field theory (CohFT), Rossi, Shadrin, and the first author proposed a simple formula for a bracket on the space local functionals that conjecturally gives a second Hamiltonian structure for the double ramification hierarchy associated to the CohFT. In this paper we prove this conjecture at the approximation up to genus~$1$ and relate this bracket to the second Poisson bracket of the Dubrovin--Zhang hierarchy by an explicit Miura transformation.
\end{abstract}

\date{\today}

\maketitle

\section{Introduction}

The appearance of integrable systems of PDEs in the intersection theory of the moduli spaces~$\oM_{g,n}$ of stable algebraic curves of genus $g$ with $n$ marked points was first manifested by the Kontsevich--Witten theorem \cite{Wit91,Kon92}, which states that the generating series of integrals over $\oM_{g,n}$ of monomials in psi classes (the first Chern classes of tautological line bundles) is controlled by a special solution of the Korteweg--de Vries hierarchy. Various versions of Witten's conjecture were proposed (the two most famous are in the Gromov--Witten theory of~$\mbC\mbP^1$~\cite{DZ04,OP06} and in the $r$-spin theory~\cite{Wit93,FSZ10}), when it was realized that integrable systems appear in a very general context, where the central role is played by the notion of a \emph{cohomological field theory} (CohFT). CohFTs are systems of cohomology classes on the moduli spaces~$\oM_{g,n}$ that are compatible with natural morphisms between the moduli spaces. They were introduced by Kontsevich and Manin in~\cite{KM94} to axiomatize the properties of Gromov--Witten classes of a given target variety. A \emph{correlator} of a CohFT is the integral over~$\oM_{g,n}$ of a monomial in the psi classes multiplied by a cohomology class forming the CohFT.

\medskip

In~\cite{DZ98} Dubrovin and Zhang constructed a Hamiltonian hierarchy controlling the correlators of an arbitrary CohFT at the approximation up to genus~$1$ and proved the polynomiality of the Hamiltonians and of the Poisson bracket. Moreover, in the case of a homogeneous CohFT they endowed the hierarchy with a polynomial bihamiltonian structure (also at the approximation up to genus~$1$). In the subsequent paper~\cite{DZ01} Dubrovin and Zhang presented a construction of a bihamiltonian hierarchy, called now the \emph{Dubrovin--Zhang (DZ) hierarchy} or the \emph{hierarchy of topological type}, controlling the correlators (in all genera) of an arbitrary semisimple homogeneous CohFT. However, the polynomiality of the Hamiltonians and of the two Poisson brackets was left as an open problem.

\medskip

In~\cite{BPS12b} the authors extended the construction of the DZ hierarchy to an arbitrary, not necessarily semisimple or homogeneous, CohFT and proved the polynomiality of the Hamiltonians and of the Poisson structure in the semisimple case (a simpler proof was obtained in~\cite{BPS12a}). In the case of a homogeneous CohFT the hierarchy is endowed with a second Hamiltonian structure whose polynomiality remains an important unproven feature of the DZ hierarchy.

\medskip

In~\cite{Bur15} a new construction of a Hamiltonian hierarchy associated to an arbitrary, not necessarily semisimple, CohFT was introduced. This construction is also based on the intersection theory on $\oM_{g,n}$, but it employs different tautological classes, notably the \emph{double ramification cycle} (an appropriate compactification of the locus of smooth curves whose marked points support a principal divisor), which explains why this hierarchy was called the \emph{double ramification (DR) hierarchy}. By the construction, the Hamiltonians of the DR hierarchy are polynomial, and the Poisson bracket is very simple, moreover, as opposed to the one for the DZ hierarchy, it does not essentially depend on the underlying CohFT. The two hierarchies coincide in the dispersionless (genus $0$) limit and, by a conjecture from~\cite{Bur15} called the \emph{DR/DZ equivalence conjecture}, they are related by a Miura transformation, which was completely identified in~\cite{BDGR18}. Although still unproved, the DR/DZ equivalence conjecture has accumulated a remarkable amount of evidence and verifications (see, e.g., \cite{BR16,BG16,BDGR18,BDGR20,BGR19,DR19}). In particular, the DR/DZ equivalence conjecture is proved at the approximation up to genus~$1$~\cite{BDGR18}.

\medskip

\begin{remark}\label{remark:remark about semisimplicity in genus 1}
Formally speaking, the semisimplicity assumption is present in the statement of Theorem~8.4 in~\cite{BDGR18} claiming that the DR/DZ equivalence conjecture is true at the approximation up genus~$1$. However, this assumption is never used in the proof. So the DR/DZ equivalence conjecture is true for an arbitrary CohFT at the approximation up genus~$1$.
\end{remark}

\medskip

In~\cite{BRS21}, giving an arbitrary homogeneous CohFT, the authors proposed a simple formula for a bracket on the space of local functionals and conjectured that it is Poisson and gives a second Hamiltonian structure for the DR hierarchy. These (conjecturally) Poisson brackets depend on the homogeneous CohFT under consideration in a remarkably explicit way. In this paper we prove the conjecture from~\cite{BRS21} at the approximation up to genus $1$. Moreover, we study a relation with the second Poisson bracket of the DZ hierarchy. The fact that the Hamiltonians and the first Poisson brackets of the DR and the DZ hierarchies are related by the Miura transformation described in~\cite{BDGR18} was proved in~\cite{BDGR18} at the approximation up to genus $1$. Here we check that this Miura transformation also relates the second Poisson brackets (at the approximation up to genus~$1$).

\medskip

\noindent{\bf Notations and conventions.} 
\begin{enumerate}[ \textbullet]
\item Throughout the text we use the Einstein summation convention for repeated upper and lower Greek indices.

\smallskip

\item When it doesn't lead to a confusion, we use the symbol $*$ to indicate any value, in the appropriate range, of a sub- or superscript.

\smallskip

\item For a topological space~$X$ let $H^*(X)$ denote the cohomology ring of~$X$ with coefficients in~$\mbC$.

\smallskip

\item For an integer $n\ge 1$ let $[n]:=\{1,2,\ldots,n\}$.
\end{enumerate}

\medskip

\noindent{\bf Acknowledgements.} The work of A.~B. has been funded within the framework of the HSE University Basic Research Program. O.~B. is supported by Becas CONACYT para estudios de Doctoradoen el extranjero awarded by the Mexican government, Ref: 2020-000000-01EXTF-00096.

\medskip

\section{Cohomological field theories}\label{subsection:CohFT}

Let $\oM_{g,n}$ be the Deligne--Mumford moduli space of stable algebraic curves of genus~$g$ with~$n$ marked points, $g\geq 0$, $n\geq 0$, $2g-2+n>0$. Note that $\oM_{0,3}=\mathrm{pt}$, and we will use the identification $H^*(\oM_{0,3})=\mbC$. Recall the following system of standard maps between these~spaces:
\begin{enumerate}[ \textbullet]
\item $\pi\colon\oM_{g,n+1}\to\oM_{g,n}$ is the map that forgets the last marked point. 

\smallskip

\item $\gl_{g_1,I_1;g_2,I_2}\colon\oM_{g_1,n_1+1}\times\oM_{g_2,n_2+1}\to \oM_{g_1+g_2,n_1+n_2}$, is the gluing map that identifies the last marked points of curves of genus $g_1$ and $g_2$ and turns them into a node. The sets~$I_1$ and~$I_2$ of cardinality $n_1$ and $n_2$, $I_1\sqcup I_2=[n_1+n_2]$, keep track of the relabelling of the remaining marked points. 

\smallskip

\item $\gl^{\mathrm{irr}}_{g,n+2}\colon\oM_{g,n+2}\to \oM_{g+1,n}$ is the gluing map that identifies the last two marked points and turns them into a node.
\end{enumerate}

\medskip

Let $V$ be a finite dimensional vector space of dimension $N$ with a distinguished vector $e\in V$, called the \emph{unit}, and a symmetric nondegenerate bilinear form $(\cdot,\cdot)$ on $V$, called the \emph{metric}. We fix a basis $e_1,\ldots,e_N$ in $V$ and let $\eta=(\eta_{\alpha\beta})$ denote the matrix of the metric in this basis, $\eta_{\alpha\beta}:= (e_\alpha,e_\beta)$, and let $A_\alpha$ denote the coordinates of $e$ in this basis, $e=A^\alpha e_\alpha$. As usual, $\eta^{\alpha\beta}$ denotes the entries of the inverse matrix, $(\eta^{\alpha\beta}):= (\eta_{\alpha\beta})^{-1}$.

\begin{definition}[\cite{KM94}]
A \emph{cohomological field theory} (CohFT) is a system of linear maps $c_{g,n}\colon V^{\otimes n} \to H^\even(\oM_{g,n})$, $2g-2+n>0$, such that the following axioms are satisfied:
\begin{enumerate}[ (1)]
\item The maps $c_{g,n}$ are equivariant with respect to the $S_n$-action permuting the $n$ copies of~$V$ in~$V^{\otimes n}$ and the $n$ marked points on curves from $\oM_{g,n}$, respectively;

\smallskip

\item $\pi^* c_{g,n}( \otimes_{i=1}^n e_{\alpha_i}) = c_{g,n+1}(\otimes_{i=1}^n  e_{\alpha_i}\otimes e)$ and $c_{0,3}(e_{\alpha_1}\otimes e_{\alpha_2} \otimes e)=\eta_{\alpha_1\alpha_2}$;

\smallskip

\item $\gl_{g_1,I_1;g_2,I_2}^* c_{g_1+g_2,n_1+n_2}( \otimes_{i=1}^{n_1+n_2} e_{\alpha_i}) = c_{g_1,n_1+1}(\otimes_{i\in I_1} e_{\alpha_i} \otimes e_\mu)\otimes c_{g_2,n_2+1}(\otimes_{i\in I_2} e_{\alpha_i}\otimes e_\nu)\eta^{\mu \nu}$;

\smallskip

\item $(\gl^\mathrm{irr}_{g,n+2})^* c_{g+1,n}(\otimes_{i=1}^n e_{\alpha_i}) = c_{g,n+2}(\otimes_{i=1}^n e_{\alpha_i}\otimes e_{\mu}\otimes e_\nu) \eta^{\mu \nu}$.
\end{enumerate}
\end{definition}

\medskip

Let us assume now that $V$ is a graded vector space and the basis $e_1,\ldots,e_N$ is homogeneous with $\deg e_\alpha = q_\alpha$, $\alpha=1,\dots,N$. Assume also that $\deg e = 0$. By $\Deg\colon H^*(\oM_{g,n})\to H^*(\oM_{g,n})$ we denote the operator that acts on~$H^i(\oM_{g,n})$ by multiplication by $\frac{i}{2}$.

\begin{definition} 
A CohFT $\{c_{g,n}\}$ is called \emph{homogeneous}, or \emph{conformal}, if there exist complex constants $r^\alpha$, $\alpha=1,\dots,N$, and $\delta$ such that
\begin{gather}\label{eq:definition of a homogeneous CohFT}
\Deg c_{g,n}(\otimes_{i=1}^ne_{\alpha_i})+\pi_*c_{g,n+1}(\otimes_{i=1}^ne_{\alpha_i}\otimes r^\gamma e_\gamma)=\left(\sum_{i=1}^n q_{\alpha_i}+\delta(g-1)\right)c_{g,n}(\otimes_{i=1}^ne_{\alpha_i}).
\end{gather}
The constant $\delta$ is called the \emph{conformal dimension} of CohFT.
\end{definition}

\medskip

For an arbitrary homogeneous CohFT let us introduce the formal power series
$$
F=F(t^1,\ldots,t^N):= \sum_{n\geq 3}\frac{1}{n!}\sum_{1\leq\alpha_1,\ldots,\alpha_n\leq N}\left(\int_{\oM_{0,n}}c_{0,n}(\otimes_{i=1}^n e_{\alpha_i})\right)\prod_{i=1}^n t^{\alpha_i}
$$
and define $C^\alpha_{\beta\gamma}:=\eta^{\alpha\nu}\frac{\d^3 F}{\d t^\nu\d t^\beta\d t^\gamma}$. The structure constants $C^\alpha_{\beta\gamma}$ define a formal family of commutative associative algebras with the unit $\frac{\d}{\d t^\un}:=A^\nu\frac{\d}{\d t^\nu}$, and moreover
$$
\left((1-q_\alpha)t^\alpha+r^\alpha\right)\frac{\d F}{\d t^\alpha}=(3-\delta)F+\frac{1}{2}A_{\alpha\beta}t^\alpha t^\beta,\quad\text{where}\quad A_{\alpha\beta}:= r^\mu c_{0,3}(e_\alpha\otimes e_\beta\otimes e_\mu),
$$
which means that the formal power series $F$ defines the structure of a homogeneous Dubrovin--Frobenius manifold~\cite{Dub96} on a formal neighbourhood of $0$ in $V$ with the Euler vector field given by $E=E^\alpha\frac{\d}{\d t^\alpha}:= \left((1-q_\alpha)t^\alpha+r^\alpha\right)\frac{\d}{\d t^\alpha}$. In particular, we have the following properties:
$$
C^\alpha_{\beta\gamma}C^\gamma_{\delta\theta}=C^\alpha_{\delta\gamma}C^\gamma_{\beta\theta},\qquad (\mu_\alpha+\mu_\beta)\eta_{\alpha\beta}=0.
$$

\medskip

\begin{convention}
We will systematically raise and lower indices in tensors using the metric $\eta$. For example, $C^{\alpha\beta}_\gamma:=\eta^{\alpha\nu}C^\beta_{\nu\gamma}$. 
\end{convention}

\medskip

\section{The Dubrovin--Zhang and the double ramification hierarchies}

\subsection{Differential polynomials, Poisson operators, and Hamiltonian hierarchies}

Let $u^1,\ldots,u^N$ be formal variables. Let us very briefly recall the main notions and notations in the formal theory of evolutionary PDEs with one spatial variable (and refer a reader, for example, to~\cite{BRS21} for details):
\begin{enumerate}[ \textbullet]

\item To the formal variables $u^\alpha$ we attach formal variables $u^\alpha_d$ with $d\ge 0$ and introduce the ring of \emph{differential polynomials} $\cA_u:=\mbC[[u^*]][u^*_{\ge 1}]$ (in~\cite{BRS21} it is denoted by~$\cA^0_u$). We identify $u^\alpha_0=u^\alpha$ and also denote $u^\alpha_x:=u^\alpha_1$, $u^{\alpha}_{xx}:=u^\alpha_2$, \ldots.

\smallskip

\item The space $\Lambda_u:=\left.\cA_u\right/(\mbC\oplus\Im\,\d_x)$ is called the space of \emph{local functionals} (in~\cite{BRS21} it is denoted by~$\Lambda^0_u$).

\smallskip

\item $\cA_{u;d}\subset\cA_u$ and $\Lambda_{u;d}\subset\Lambda_u$ are the homogeneous components of (differential) degree $d$, where $\deg u^\alpha_i:=i$.

\smallskip

\item The extended spaces of differential polynomials and local functionals are defined by $\hcA_u:= \cA_u[[\eps]]$ and $\hLambda_u:= \Lambda_u[[\eps]]$. Let $\hcA_{u;k}\subset\hcA_u$ and~$\hLambda_{u;k}\subset\hLambda_u$ be the subspaces of degree~$k$, where $\deg\eps:=-1$.  

\smallskip

\item We associate with $f\in\hcA_u$ the sequence of differential operators $L_\alpha^k(f):=\sum_{i\ge k}{i\choose k}\frac{\d f}{\d u^\alpha_i}\d_x^{i-k}$, $\alpha=1,\ldots,N$, $k\ge 0$. We denote $L_\alpha(f):=L^0_\alpha(f)$. 

\smallskip

\item Given an $N\times N$ matrix~$K=(K^{\mu\nu})$ of differential operators of the form $K^{\mu\nu} = \sum_{j\geq 0} K^{\mu\nu}_j \partial_x^j=\sum_{l,j\geq 0} \eps^l K^{[l],\mu\nu}_j \partial_x^j$, where $K^{[l],\mu\nu}_j\in\cA_{u;l-j+1}$, a bracket of degree~$1$ on the space $\hLambda_u$ is defined by $\{\overline{f},\overline{g}\}_{K}:=\int\left(\frac{\delta \overline{f}}{\delta u^\mu}K^{\mu \nu}\frac{\delta \overline{g}}{\delta u^\nu}\right)dx$.

\smallskip

\item An operator $K$ is called \emph{Poisson}, if the bracket $\{\cdot,\cdot\}_K$ is skewsymmetric and satisfies the Jacobi identity. The space of Poisson operators will be denoted by $\cPO_u$.	

\smallskip

\item Two Poisson operators $K_1$ and $K_2$ are said to be \emph{compatible} if the linear combination $K_2-\lambda K_1$ is a Poisson operator for any $\lambda\in\mbC$. 

\smallskip

\item A \emph{Miura transformation} is a change of variables $u^\alpha\mapsto \tu^\alpha(u^*_*,\eps)$ of the form $\tu^\alpha(u^*_*,\eps)=u^\alpha+\eps f^\alpha(u^*_*,\eps)$, where
$f^\alpha\in\hcA_{u;1}$. A Poisson operator $K$ rewritten in the new variables $\tu^\alpha$ will be denoted by $K_\tu$.

\smallskip

\item For a scalar operator $A=\sum_m A_m\d_x^m$, $A_m\in\cA_u$ (the sum is finite), let $A^\dagger:=\sum_m(-\d_x)^m\circ A_m$.

\smallskip

\item For a matrix operator $K=(K^{\alpha\beta})$, $K^{\alpha\beta}=\sum_m K^{\alpha\beta}_m\d_x^m$, $K^{\alpha\beta}_m\in\cA_u$ (the sum is finite), let $K^\dagger=(K^{\dagger;\alpha\beta})$, where $K^{\dagger;\alpha\beta}:=\sum_m(-\d_x)^m\circ K^{\beta\alpha}_m$.
\end{enumerate}

\medskip

\begin{definition}
A \emph{Hamiltonian hierarchy} of PDEs is a system of the form
\begin{gather*}
\frac{\partial u^\alpha}{\partial \tau_i} = K^{\alpha\mu} \frac{\delta\overline{h}_i}{\delta u^\mu}, \quad 1\le\alpha\le N,\quad i\ge 1,
\end{gather*}
where $\oh_i\in\hLambda_{u;0}$, $K=(K^{\mu\nu})$ is a Poisson operator, and $\{\oh_i,\oh_j\}_K=0$, $i,j\geq 1$. The local functionals~$\oh_i$ are called the \emph{Hamiltonians}.
\end{definition}

\medskip

\begin{definition} 
A Hamiltonian hierarchy of the form
\begin{gather}\label{eq:Hamiltonian system,2}
\frac{\d u^\alpha}{\d t^\beta_q} = K_1^{\alpha\mu}\frac{\delta\oh_{\beta,q}}{\delta u^\mu}, \quad 1\le\alpha,\beta\le N,\quad q\ge 0,
\end{gather}
equipped additionally with $N$ linearly independent Casimirs $\oh_{\alpha,-1}$, $1\le\alpha\le N$, of the Poisson bracket~$\{\cdot,\cdot\}_{K_1}$, is said to be \emph{bihamiltonian} if it is endowed with a Poisson operator~$K_2$ compatible with the operator~$K_1$ and such that
\begin{gather}\label{eq:bihamiltonian recursion}
\{\cdot,\oh_{\alpha,i-1}\}_{K_2}=\sum_{j=0}^i R^{j,\beta}_{i,\alpha}\{\cdot,\oh_{\beta,i-j}\}_{K_1},\quad 1\le\alpha\le N,\quad i\ge 0,
\end{gather}
where $R^j_i=(R^{j,\beta}_{i,\alpha})$, $0\le j\le i$, are constant $N\times N$ matrices. The relation~\eqref{eq:bihamiltonian recursion} is called a \emph{bihamiltonian recursion}. 
\end{definition}

\medskip

\subsection{The double ramification hierarchy}

Denote by $\psi_i\in H^2(\oM_{g,n})$ the first Chern class of the line bundle over~$\oM_{g,n}$ formed by the cotangent lines at the $i$-th marked point of stable curves. Denote by~$\mathbb E$ the rank~$g$ Hodge vector bundle over~$\oM_{g,n}$ whose fibers are the spaces of holomorphic one-forms on stable curves. Let $\lambda_j:= c_j(\mathbb E)\in H^{2j}(\oM_{g,n})$. 

\medskip

For any $a_1,\dots,a_n\in \mbZ$, $\sum_{i=1}^n a_i =0$, denote by $\DR_g(a_1,\ldots,a_n) \in H^{2g}(\oM_{g,n})$ the \emph{double ramification (DR) cycle}. We refer the reader, for example, to~\cite{BSSZ15} for the definition of the DR cycle on~$\oM_{g,n}$, which is based on the notion of a stable map to $\CP^1$ relative to~$0$ and~$\infty$. If not all the multiplicities $a_i$ are equal to zero, then one can think of the class $\DR_g(a_1,\ldots,a_n)$ as the Poincar\'e dual to a compactification in~$\oM_{g,n}$ of the locus of pointed smooth curves~$(C;p_1,\ldots,p_n)$ satisfying $\mathcal O_C\left(\sum_{i=1}^n a_ip_i\right)\cong\mathcal O_C$. 

\medskip

The crucial property of the DR cycle is that for any cohomology class $\theta\in H^*(\oM_{g,n})$ the integral $\int_{\oM_{g,n+1}}\lambda_g\DR_g\left(-\sum a_i,a_1,\ldots,a_n\right)\theta$ is a homogeneous polynomial in $a_1,\ldots,a_n$ of degree~$2g$ (see, e.g.,~\cite{Bur15}). Therefore, for a given CohFT $\left\{c_{g,n}\colon V^{\otimes n} \to H^{\even}(\oM_{g,n})\right\}$, define differential polynomials $g_{\alpha,d}\in\hcA_{u;0}$, $1\le\alpha\le N$, $d\ge 0$, as follows:
\begin{multline*}
g_{\alpha,d}:=\sum_{\substack{g,n\ge 0\\2g-1+n>0}}\frac{\eps^{2g}}{n!}\sum_{\substack{b_1,\ldots,b_n\ge 0\\b_1+\ldots+b_n=2g}}u^{\alpha_1}_{b_1}\ldots u^{\alpha_n}_{b_n}\times\\
\times\Coef_{a_1^{b_1}\ldots a_n^{b_n}}\int_{\oM_{g,n+1}}\DR_g\left(-\sum a_i,a_1,\ldots,a_n\right)\lambda_g\psi_1^d c_{g,n+1}(e_\alpha\otimes\otimes_{i=1}^n e_{\alpha_i}).
\end{multline*}
It is proved in~\cite{Bur15} that the local functionals~$\og_{\alpha,d}:=\int g_{\alpha,d}dx$ mutually commute with respect to the bracket~$\{\cdot,\cdot\}_{\eta^{-1}\d_x}$. 

\begin{definition}
The Hamiltonian hierarchy
$$
\frac{\d u^\alpha}{\d t^\beta_q}=\eta^{\alpha\mu}\d_x\frac{\delta\og_{\beta,q}}{\delta u^\mu},\quad 1\le\alpha,\beta\le N,\quad q\ge 0,
$$
is called the \emph{double ramification hierarchy}.
\end{definition}

\medskip

Let us equip the DR hierarchy with the following~$N$ linearly independent Casimirs of its Poisson bracket $\{\cdot,\cdot\}_{\eta^{-1}\d_x}$: $\og_{\alpha,-1}:=\int\eta_{\alpha\beta}u^\beta dx$, $1\le\alpha\le N$. Another important object related to the DR hierarchy is the local functional $\og\in\hLambda_{u;0}$ determined by the relation~\cite[Section 4.2.5]{Bur15} 
$$
\og_{\un,1}=(D-2)\og,\quad\text{where}\quad D:=\sum_{n\ge 0}(n+1)u^\alpha_n\frac{\d}{\d u^\alpha_n},
$$
and $\og_{\un,1}:=A^\alpha\og_{\alpha,1}$. Note that $\frac{\delta\og}{\delta u^\alpha}=g_{\alpha,0}$. Also, the local functional~$\og$ has the following explicit expression at the approximation up to~$\eps^2$ (\cite[Lemma~8.1]{BDGR18}):
\begin{gather}\label{eq:og up to genus 1}
\og=\int f dx-\frac{\eps^2}{48}\int c^\theta_{\theta\xi}c^\xi_{\alpha\beta}u^\alpha_xu^\beta_x dx+O(\eps^4),
\end{gather}
where $f:=\left.F\right|_{t^*=u^*}$ and $c^\alpha_{\beta\gamma}:=\left.C^\alpha_{\beta\gamma}\right|_{t^*=u^*}$.

\medskip

\begin{conjecture}[\cite{BRS21}]\label{conjecture:DR hierarchy} 
Consider a homogeneous CohFT and the associated DR hierarchy. 
\begin{enumerate}[ (1)]
\item The operator $K_2^\DR=\left(K_2^{\DR;\alpha\beta}\right)$ defined by
\begin{align}
K_2^{\DR;\alpha\beta}:=\eta^{\alpha\mu}\eta^{\beta\nu}&\left(\left(\frac{1}{2}-\mu_\beta\right)\d_x\circ L_\nu(g_{\mu,0})+\left(\frac{1}{2}-\mu_\alpha\right) L_\nu(g_{\mu,0})\circ\d_x\right.\label{eq:K2DR operator}\\
&\hspace{0.2cm}\left.+A_{\mu\nu}\d_x+\d_x\circ L_\nu^1(g_{\mu,0})\circ\d_x\right)\notag
\end{align}
is Poisson and is compatible with the operator $K^\DR_1:=\eta^{-1}\d_x$. Here $\mu_\alpha:= q_\alpha-\frac{\delta}{2}$.

\smallskip

\item The Poisson brackets $\{\cdot,\cdot\}_{K^\DR_2}$ and $\{\cdot,\cdot\}_{K^\DR_1}$ give a bihamiltonian structure for the DR hierarchy with the following bihamiltonian recursion:
\begin{gather*}
\left\{\cdot,\og_{\alpha,d}\right\}_{K^\DR_2}=\left(d+\frac{3}{2}+\mu_\alpha\right)\left\{\cdot,\og_{\alpha,d+1}\right\}_{K^\DR_1}+A^\beta_\alpha\left\{\cdot,\og_{\beta,d}\right\}_{K^\DR_1},\quad d\ge -1,
\end{gather*}
where $A^\alpha_\beta:=\eta^{\alpha\nu}A_{\nu\beta}$.
\end{enumerate}
\end{conjecture}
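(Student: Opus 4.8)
The plan is to deduce the full bihamiltonian statement for the DR hierarchy from the already-available (formal, all-genus) bihamiltonian structure of the Dubrovin--Zhang hierarchy, transported through the DR/DZ Miura transformation. Recall that for a homogeneous CohFT the DZ hierarchy carries a Poisson pencil $(K_1^\DZ,K_2^\DZ)$ obtained by dressing the flat pencil of metrics of the underlying Dubrovin--Frobenius manifold \cite{DZ01,BPS12b}; as formal power series in $\eps$ this pencil is genuinely Poisson and compatible, and it obeys the standard bihamiltonian recursion of the hierarchy of topological type (the polynomiality of $K_2^\DZ$ is open, but its Poisson property is not). The DR/DZ equivalence provides a Miura transformation $\mathcal M$ matching the DR and DZ Hamiltonians and sending $K_1^\DR=\eta^{-1}\d_x$ to $K_1^\DZ$ \cite{Bur15,BDGR18}. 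Since the Poisson property, compatibility, and the bihamiltonian recursion are all invariant under Miura transformations, it suffices to prove that this same $\mathcal M$ transforms $K_2^\DZ$ into the explicit operator $K_2^\DR$ of \eqref{eq:K2DR operator}.

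Before addressing that, I would first dispose of the two structural pieces that do not require the all-genus machinery. Skew-symmetry of $K_2^\DR$, i.e.\ $(K_2^\DR)^\dagger=-K_2^\DR$, is a finite computation: collecting the four terms in \eqref{eq:K2DR operator} and using the symmetry of $A_{\mu\nu}$, the weight relation $(\mu_\alpha+\mu_\beta)\eta_{\alpha\beta}=0$, and the adjunction identities relating $L_\nu(g_{\mu,0})$, $L_\nu^1(g_{\mu,0})$ and their formal adjoints, one checks that the antisymmetric part vanishes. The bihamiltonian recursion I would derive directly from the homogeneity of the CohFT: substituting \eqref{eq:definition of a homogeneous CohFT} into the integral defining $\og_{\alpha,d}$, the action of $\Deg$ together with the degree count on $\oM_{g,n+1}$ (the matching powers of $\eps$, $\psi_1$, and $\lambda_g$) produces the scalar coefficient $d+\tfrac32+\mu_\alpha$ on the $\og_{\alpha,d+1}$ flow, the insertion $\pi_*(\,\cdots\otimes r^\gamma e_\gamma)$ produces the $A^\beta_\alpha\og_{\beta,d}$ flow, and reassembling the resulting variational derivatives through the string- and dilaton-type relations for the DR Hamiltonians (together with $\tfrac{\delta\og}{\delta u^\mu}=g_{\mu,0}$) reconstitutes exactly the action of $K_2^\DR$ on $\tfrac{\delta\og_{\alpha,d}}{\delta u}$, which is the content of Conjecture~\ref{conjecture:DR hierarchy}(2).

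The core step is the identity that $\mathcal M$ carries $K_2^\DZ$ to $K_2^\DR$. Here I would expand both sides as power series in $\eps$. The dispersionless ($\eps^0$) terms agree because both reduce to the second structure of the flat pencil (the intersection form) of the Frobenius manifold, which is genus independent and is already matched by $\mathcal M$ at leading order. For the higher-order terms one compares the quasi-Miura dressing formulas for $K_2^\DZ$ with the closed expression \eqref{eq:K2DR operator}, using that the coefficients of $g_{\mu,0}$ are the DR intersection numbers with $\lambda_g$; the homogeneity recursion established in the previous step is precisely what pins down the correction terms on both sides and forces them to coincide. Granting this identity, Miura-invariance transfers the Poisson property of $K_2^\DR$, its compatibility with $K_1^\DR$, and the recursion to the DR side, completing the proof.

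The main obstacle is that this reduction rests on the DR/DZ equivalence for the Hamiltonians and the first Poisson structure in \emph{all} genera, which is presently known only through the genus-one approximation \cite{BDGR18}; equivalently, matching the quasi-Miura dressing of $K_2^\DZ$ against \eqref{eq:K2DR operator} at order $\eps^{2g}$ requires controlling the $\eps$-expansion of $g_{\mu,0}$, and hence the intersection numbers of the DR cycle against $\lambda_g$ and the $\psi$-classes, to arbitrary order. A route avoiding DR/DZ would instead demand a direct proof that the Schouten bracket $[K_2^\DR,K_2^\DR]$ vanishes at every order in $\eps$, for which one would need tautological relations among these DR-cycle integrals in arbitrary genus. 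This all-genus control is the genuine difficulty, and it is the reason the argument can at present be carried through in closed form only at the first nontrivial order.
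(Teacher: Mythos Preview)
This statement is a \emph{conjecture}; the paper does not prove it in full. What the paper establishes is the genus-$1$ approximation, as part of Theorem~\ref{theorem:main}, and it does so by precisely the transport strategy you outline: one checks by explicit computation that the Miura transformation carries $K_2^\DZ$ to $K_2^\DR$ modulo $O(\eps^4)$ (Steps~1--3 of Section~\ref{section:proof of main theorem}), whence the Poisson and compatibility properties of $K_2^\DR$, and the bihamiltonian recursion, are inherited from the DZ side, known at this order from~\cite{DZ98}.

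As an all-genus argument your proposal has two genuine gaps. First, you invoke the DR/DZ equivalence for the Hamiltonians and for $K_1$ in all genera as an established fact, citing~\cite{Bur15,BDGR18}; but this is itself an open conjecture beyond genus~$1$. You concede this at the end, which means the argument is at best conditional. Second, and independently, your ``core step'' --- that $\mathcal M$ carries $K_2^\DZ$ onto the closed formula~\eqref{eq:K2DR operator} at every order in~$\eps$ --- is asserted rather than proved. The sentence ``the homogeneity recursion \ldots\ forces them to coincide'' is not an argument: even if both $K_{2;\tu}^\DZ$ and $K_{2;\tu}^\DR$ satisfied the same recursion with the same $K_1$ and Hamiltonians, passing from that to equality of the operators requires a separate uniqueness statement, and the one referenced in the paper (from~\cite[Section~6]{BPS12b}) is stated for \emph{Poisson} operators, which is exactly what you are trying to establish for $K_2^\DR$. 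The paper avoids this at genus~$1$ by a direct coefficient-by-coefficient computation; your sketch offers no replacement for that computation at higher order. Note also that invoking Conjecture~\ref{conjecture:DR/DZ equivalence} as such would be circular, since it is explicitly stated \emph{assuming} Conjecture~\ref{conjecture:DR hierarchy}.
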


\medskip

\subsection{The Dubrovin--Zhang hierarchy}

Consider an arbitrary homogeneous CohFT $\{c_{g,n}\}$. Let $t^\alpha_a$, $1\le\alpha\le N$, $a\ge 0$, be formal variables, where we identify $t^\alpha_0=t^\alpha$. The \emph{potential} of our CohFT is defined by
$$
\mcF(t^*_*,\eps)=\sum_{g\ge 0}\eps^{2g}\mcF_g(t^*_*):=\hspace{-0.15cm}\sum_{\substack{g,n\ge 0\\2g-2+n>0}}\hspace{-0.05cm}\frac{\eps^{2g}}{n!}\hspace{-0.1cm}\sum_{\substack{1\leq\alpha_1,\ldots,\alpha_n\leq N\\d_1,\ldots,d_n\ge 0}}\hspace{-0.1cm}\left(\int_{\oM_{g,n}}\hspace{-0.3cm}c_{g,n}(\otimes_{i=1}^n e_{\alpha_i})\prod_{i=1}^n\psi_i^{d_i}\right)\prod_{i=1}^n t^{\alpha_i}_{d_i}\in\mbC[[t^*_*,\eps]],
$$
and introduce also the formal power series $w^{\top;\alpha}:=\eta^{\alpha\mu}\frac{\d^2\mcF}{\d t^\mu_0\d t^\un_0}$ and $w^{\top;\alpha}_n:=\frac{\d^n}{(\d t^\un_0)^n}w^{\top;\alpha}$, where $1\le\alpha\le N$ and $n\ge 0$.

\begin{conjecture}[\cite{DZ01}]\label{conjecture:DZ hierarchy}
Consider the ring $\hcA_{w}$ of differential polynomials in variables $w^1,\ldots,w^N$.
\begin{enumerate}[ (1)]

\item For any $1\le\alpha,\beta\le N$ and $a,b\ge 0$ there exists a differential polynomial $\Omega_{\alpha,a;\beta,b}\in\hcA_{w;0}$ such that 
\begin{gather}\label{eq:F and Omega}
\frac{\d^2\mcF}{\d t^\alpha_a\d t^\beta_b}=\left.\Omega_{\alpha,a;\beta,b}\right|_{w^\gamma_n=w^{\top;\gamma}_n}.
\end{gather}

\smallskip

\item There exists a Poisson operator $K_1^{\DZ}=\left(K_1^{\DZ;\alpha\beta}\right)$, for which the local functionals $\oh_{\alpha,-1}:=\int\eta_{\alpha\nu}w^\nu dx$ are Casimirs, such that 
\begin{gather}\label{eq:K1 and Omega}
\eta^{\alpha\mu}\d_x\Omega_{\mu,0;\beta,b}=K_1^{\DZ;\alpha\nu}\frac{\delta\oh_{\beta,b}}{\delta w^\nu},
\end{gather}
where $\oh_{\beta,b}:=\int\Omega_{\un,0;\beta,b+1}dx$, $1\le\alpha,\beta\le N$, $b\ge 0$. 

\smallskip

\item There exists a Poisson operator $K_2^{\DZ}=\left(K_2^{\DZ;\alpha\beta}\right)$ such that that following relations are satisfied:
\begin{gather}\label{eq:DZ recursion}
\left\{\cdot,\oh_{\alpha,d}\right\}_{K^\DZ_2}=\left(d+\frac{3}{2}+\mu_\alpha\right)\left\{\cdot,\oh_{\alpha,d+1}\right\}_{K^\DZ_1}+A^\beta_\alpha\left\{\cdot,\oh_{\beta,d}\right\}_{K^\DZ_1},\quad 1\le\alpha\le N,\quad d\ge -1.
\end{gather}
\end{enumerate}
\end{conjecture}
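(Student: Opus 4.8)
These concern the polynomiality of the DZ two-point functions and of the first Hamiltonian structure and, in the semisimple case, are already theorems \cite{BPS12b,BPS12a}. For part~(1) the plan is to combine the tau-symmetry of the potential $\mcF$ with the loop equation of \cite{DZ01}: via the Givental--Teleman reconstruction a semisimple homogeneous CohFT is produced from a product of trivial one-dimensional theories by an $R$-matrix action; for the trivial theory the series $\frac{\d^2\mcF}{\d t^\alpha_a\d t^\beta_b}$ are the classical polynomial densities of the Korteweg--de Vries hierarchy, and the $R$-matrix action preserves polynomiality, which yields $\Omega_{\alpha,a;\beta,b}\in\hcA_{w;0}$ as required. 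For part~(2), once the $\Omega$'s are polynomial the operator $K_1^\DZ$ is reconstructed from the tau-structure by writing the flow $\frac{\d w^\alpha}{\d t^\beta_b}=\eta^{\alpha\mu}\d_x\Omega_{\mu,0;\beta,b}$ in Hamiltonian form; this forces $K_1^\DZ=\eta^{\alpha\beta}\d_x+O(\eps^2)$ with polynomial coefficients and makes $\oh_{\alpha,-1}$ a Casimir. Outside the semisimple range, polynomiality would be inherited from the manifestly polynomial DR side \cite{Bur15} through the Miura transformation of \cite{BDGR18}, at the cost of invoking the DR/DZ equivalence.

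\textbf{Part (3): construction of $K_2^\DZ$.} The core of the conjecture is the existence of a \emph{polynomial} second Poisson operator obeying \eqref{eq:DZ recursion}, and my plan is to import it from the DR hierarchy. Granting Conjecture~\ref{conjecture:DR hierarchy}, the operator $K_2^\DR$ of \eqref{eq:K2DR operator} is a polynomial Poisson operator compatible with $K^\DR_1=\eta^{-1}\d_x$. By \cite{BDGR18} the DR and DZ hierarchies are related by a Miura transformation $u^\alpha\mapsto w^\alpha(u^*_*,\eps)$ that carries $K^\DR_1$ to $K_1^\DZ$ and the Hamiltonians $\og_{\alpha,d}$ to $\oh_{\alpha,d}$. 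I would then simply set
\[
K_2^\DZ:=\left(K_2^\DR\right)_w,
\]
the operator \eqref{eq:K2DR operator} rewritten in the variables $w$. Because any Miura transformation maps $\hcA_u$ into $\hcA_w$ and $\cPO_u$ into $\cPO_w$, and preserves compatibility of a Poisson pencil, $K_2^\DZ$ is at once a polynomial Poisson operator compatible with $K_1^\DZ$; no further work is needed for the Poisson and polynomiality claims of part~(3).

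\textbf{The recursion and the main obstacle.} With $K_2^\DZ$ defined this way, the recursion \eqref{eq:DZ recursion} follows by applying the same Miura transformation to the DR bihamiltonian recursion of Conjecture~\ref{conjecture:DR hierarchy}(2): the brackets transform as $\{\cdot,\cdot\}_{K^\DR_i}\mapsto\{\cdot,\cdot\}_{K^\DZ_i}$, the Hamiltonians match $\og_{\alpha,d}\mapsto\oh_{\alpha,d}$, and the numerical coefficients $d+\frac32+\mu_\alpha$ and $A^\beta_\alpha$, depending only on the conformal data of the CohFT, are unchanged, so the DR recursion becomes exactly \eqref{eq:DZ recursion}. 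The whole difficulty is thus displaced onto Conjecture~\ref{conjecture:DR hierarchy}, which demands that \eqref{eq:K2DR operator} be genuinely Poisson: its skewsymmetry, $\left(K_2^\DR\right)^\dagger=-K_2^\DR$, is a direct check using $(\mu_\alpha+\mu_\beta)\eta_{\alpha\beta}=0$, but the Jacobi identity and the compatibility with $\eta^{-1}\d_x$ are not, and in full generality they remain open---the present paper establishes them only modulo $\eps^4$. I do not expect a direct all-genus attack on the DZ side to be easier: the alternative of transporting the dispersionless second bracket of the Dubrovin--Frobenius manifold \cite{Dub96} through the quasitriviality transformation does not deliver polynomiality automatically, precisely because that transformation is non-polynomial (it carries logarithmic terms). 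The polynomiality of $K_2^\DZ$, the long-standing open point of \cite{DZ01}, is therefore the genuine obstacle.
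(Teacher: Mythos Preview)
The statement you are asked to prove is labeled a \emph{conjecture} in the paper, attributed to~\cite{DZ01}, and the paper does not prove it. The paper only records what is known: the genus-$1$ approximation from~\cite{DZ98}, and Parts~(1)--(2) in the semisimple case from~\cite{BPS12b,BPS12a}; Part~(3) in full is open. There is therefore no ``paper's own proof'' to compare against.

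Your proposal is not a proof but a conditional reduction, which you largely acknowledge. Two points still deserve flagging. First, when you write ``By~\cite{BDGR18} the DR and DZ hierarchies are related by a Miura transformation $u^\alpha\mapsto w^\alpha(u^*_*,\eps)$ that carries $K^\DR_1$ to $K_1^\DZ$ and the Hamiltonians $\og_{\alpha,d}$ to $\oh_{\alpha,d}$'', you overstate what~\cite{BDGR18} actually proves: that matching of Hamiltonians and first brackets is itself the (open) DR/DZ equivalence, established in~\cite{BDGR18} only at the approximation up to genus~$1$. So your route to Part~(3) rests on \emph{two} open conjectures, not just Conjecture~\ref{conjecture:DR hierarchy}. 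Second, beware of circularity: Conjecture~\ref{conjecture:DR/DZ equivalence} as stated in this paper explicitly \emph{assumes} Conjecture~\ref{conjecture:DZ hierarchy}, so it cannot be invoked to prove it; you would need to isolate and independently establish a weaker statement (equivalence of Hamiltonians and of $K_1$) that does not presuppose the existence of a polynomial~$K_2^\DZ$. The same circularity afflicts your remark about deducing polynomiality in the non-semisimple case ``at the cost of invoking the DR/DZ equivalence''.
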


\medskip

If differential polynomials from Part 1 of the conjecture exist, then they are unique (see, e.g.,~\cite[Section~7.1]{BDGR18}). Moreover, if Poisson operators from Parts 2 and 3 exist, then they are also unique~(see, e.g.,~\cite[Section~6]{BPS12b}). Part 2 of the conjecture implies that the local functionals~$\oh_{\alpha,d}$ mutually commute with respect to the bracket~$\{\cdot,\cdot\}_{K_1^\DZ}$. The resulting bihamiltonian hierarchy (if the conjecture is true)
$$
\frac{\d w^\alpha}{\d t^\beta_q}=K_1^{\DZ;\alpha\mu}\frac{\delta\oh_{\beta,q}}{\delta u^\mu},\quad 1\le\alpha,\beta\le N,\quad q\ge 0,
$$
is called the \emph{Dubrovin--Zhang (DZ) hierarchy}. The $N$-tuple of formal power series $w^{\top;\alpha}$ is a solution of the hierarchy, where we identify the derivative $\d_x$ with $\frac{\d}{\d t^\un_0}$. This solution is called the \emph{topological solution}. 

\medskip

Conjecture~\ref{conjecture:DZ hierarchy} is proved at the approximation up to genus~$1$~\cite{DZ98}. In particular, 
\begin{align*}
&\Omega_{\alpha,a;\beta,b}=\left.\frac{\d^2\mcF_0}{\d t^\alpha_a\d t^\beta_b}\right|_{t^\gamma_c=\delta_{c,0}w^\gamma}+O(\eps^2),\\
&K_1^{\DZ;\alpha\beta}=\eta^{\alpha\beta}\d_x+O(\eps^2),\\
&K_2^{\DZ;\alpha\beta}=\left.\left(E^\gamma C_\gamma^{\alpha\beta}\right)\right|_{t^*=w^*}\d_x+\left(\frac{1}{2}-\mu_\beta\right)\left.\left(C^{\alpha\beta}_\gamma\right)\right|_{t^*=w^*}w^\gamma_x+O(\eps^2).
\end{align*}
Parts 1 and 2 of the conjecture are proved for an arbitrary semisimple, not necessarily homogeneous, CohFT \cite{BPS12b} (a considerably simplified proof of Part 2 is presented in~\cite{BPS12a}). 

\medskip

\subsection{The DR/DZ equivalence conjecture}

We again consider an arbitrary homogeneous CohFT. 

\medskip

The \emph{normal coordinates} of the DR hierarchy are defined by $\tu^\alpha(u^*_*,\eps):=\eta^{\alpha\mu}\frac{\delta\og_{\mu,0}}{\delta u^\un}$.

\medskip

In~\cite[Proposition~7.2]{BDGR18} the authors proved that there exists a unique differential polynomial $\cP\in\hcA_{w;-2}$ such that the power series $\cF^\red\in\mbC[[t^*_*,\eps]]$ defined by $\cF^\red:=\cF+\left.\cP\right|_{w^\gamma_n=w^{\top;\gamma}_n}$ satisfies the following vanishing property:
\begin{gather}\label{eq:property of Fred}
\Coef_{\eps^{2g}}\left.\frac{\d^n\cF^\red}{\d t^{\alpha_1}_{d_1}\ldots\d t^{\alpha_n}_{d_n}}\right|_{t^*_*=0}=0,\quad\text{if}\quad \sum_{i=1}^n d_i\le 2g-2.
\end{gather}
The differential polynomial $\cP$ has the following form: $\cP=-\eps^2 G(w^1,\ldots,w^N)+O(\eps^4)$, where $G(t^1,\ldots,t^N):=\left.\mcF_1\right|_{t^*_{\ge 1}=0}$. The power series $\cF^\red$ is called the {\it reduced potential} of our CohFT.

\medskip

Let us relate the variables $\tu^\alpha$ to the variables $w^\alpha$ by the following Miura transformation: $\tu^\alpha(w^*_*,\eps):=w^\alpha+\eta^{\alpha\nu}\d_x\{\cP,\oh_{\nu,0}\}_{K_1^\DZ}$.

\medskip

\begin{conjecture}[\cite{BDGR18} and \cite{BRS21}]\label{conjecture:DR/DZ equivalence}
Assuming that Conjectures~\ref{conjecture:DR hierarchy} and~\ref{conjecture:DZ hierarchy} are true, the DR and the DZ hierarchies, together with their bihamiltonian structures, coincide when we rewrite them in the coordinates~$\tu^\alpha$.
\end{conjecture}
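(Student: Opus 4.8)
The plan is to separate the assertion into two layers --- the coincidence of the Hamiltonians together with the first Poisson brackets, and the coincidence of the second Poisson brackets --- and to show that, once Conjectures~\ref{conjecture:DR hierarchy} and~\ref{conjecture:DZ hierarchy} are granted, the second layer is a purely formal consequence of the first. Throughout I work in the common coordinate $\tu^\alpha$, reached from the DR side by the normal-coordinate Miura transformation and from the DZ side by $\tu^\alpha(w^*_*,\eps)=w^\alpha+\eta^{\alpha\nu}\d_x\{\cP,\oh_{\nu,0}\}_{K_1^{\DZ}}$, and I write all of $K_1^{\DR},K_2^{\DR},K_1^{\DZ},K_2^{\DZ},\og_{\alpha,d},\oh_{\alpha,d}$ for these objects expressed in $\tu$. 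The first layer is then the statement $K_1^{\DR}=K_1^{\DZ}$ and $\og_{\alpha,d}=\oh_{\alpha,d}$ for all $d\ge-1$ (the case $d=-1$ being the matching of the Casimirs). This is exactly the classical DR/DZ equivalence for the Hamiltonians and the first bracket, taken in all genera rather than in its genus-one truncation; I will return to it as the crux.

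Granting it, the upgrade to the second brackets proceeds as follows. Since a Miura transformation preserves both the Poisson property and compatibility, $K_2^{\DR}$ is a Poisson operator compatible with $K_1^{\DR}=K_1^{\DZ}$. The DR bihamiltonian recursion of Conjecture~\ref{conjecture:DR hierarchy} is a coordinate-free identity of brackets, so in the coordinate $\tu$, after substituting $\og_{\alpha,d}=\oh_{\alpha,d}$ and $K_1^{\DR}=K_1^{\DZ}$, it reads
\begin{gather*}
\{\cdot,\oh_{\alpha,d}\}_{K_2^{\DR}}=\left(d+\tfrac{3}{2}+\mu_\alpha\right)\{\cdot,\oh_{\alpha,d+1}\}_{K_1^{\DZ}}+A^\beta_\alpha\{\cdot,\oh_{\beta,d}\}_{K_1^{\DZ}},\quad d\ge-1.
\end{gather*}
The decisive point is that the coefficients $d+\tfrac{3}{2}+\mu_\alpha$ and $A^\beta_\alpha$ are literally those of the DZ recursion~\eqref{eq:DZ recursion}; hence $K_2^{\DR}$ is a Poisson operator satisfying the same recursion as $K_2^{\DZ}$ relative to the same $K_1^{\DZ}$ and the same Hamiltonians $\oh_{\alpha,d}$. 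By the uniqueness of such an operator recorded after Conjecture~\ref{conjecture:DZ hierarchy}, I conclude $K_2^{\DR}=K_2^{\DZ}$. Combined with the first layer this is the desired coincidence of the two hierarchies and of their full bihamiltonian structures in the coordinates $\tu^\alpha$, to all orders in $\eps$.

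The main obstacle is the first layer, which is part of the conclusion and does not follow from Conjectures~\ref{conjecture:DR hierarchy} and~\ref{conjecture:DZ hierarchy} by themselves. I would attack it through the reduced potential $\cF^\red=\cF+\left.\cP\right|_{w^\gamma_n=w^{\top;\gamma}_n}$: the transformation $\tu(w)$ is built so that shifting the DZ tau-structure by $\cP$ reproduces the DR normal coordinates, and the vanishing property~\eqref{eq:property of Fred} of $\cF^\red$ should translate into the equality of the generating functions encoding $\og_{\alpha,d}$ and $\oh_{\alpha,d}$, thereby giving simultaneously $\og_{\alpha,d}=\oh_{\alpha,d}$ and $K_1^{\DR}=K_1^{\DZ}$. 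This matching is the deep and still open input; it is available in the genus-one approximation (Remark~\ref{remark:remark about semisimplicity in genus 1}), where the whole statement follows from the formal argument above. In the semisimple setting I would additionally explore a rigidity argument: both hierarchies share the same dispersionless limit and the same recursion constants, so the classification of bihamiltonian deformations of that structure could force them to be Miura-equivalent, after which one identifies the transformation with $\tu(w)$.
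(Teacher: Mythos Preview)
The statement you address is an open conjecture; the paper does not prove it in full, only its genus-$1$ truncation (Theorem~\ref{theorem:main}). Your reduction of the second layer to the first --- using that, under Conjectures~\ref{conjecture:DR hierarchy} and~\ref{conjecture:DZ hierarchy}, both $K_2^{\DR}$ and $K_2^{\DZ}$ satisfy the same bihamiltonian recursion with identical constants $d+\tfrac{3}{2}+\mu_\alpha$ and $A^\beta_\alpha$, hence coincide by the uniqueness recorded after Conjecture~\ref{conjecture:DZ hierarchy} --- is a sound and clean observation. You are equally right that the first layer (equality of Hamiltonians and of the first brackets in all genera) is the essential content and does not follow from Conjectures~\ref{conjecture:DR hierarchy} and~\ref{conjecture:DZ hierarchy}; so what you have is an honest reduction, not a proof, and you say so.

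At genus~$1$ your route and the paper's genuinely diverge, and the difference matters. Theorem~\ref{theorem:main} asserts \emph{both} Conjecture~\ref{conjecture:DR hierarchy} and Conjecture~\ref{conjecture:DR/DZ equivalence} at genus~$1$; in particular it does not get to assume that $K_2^{\DR}$ is Poisson or satisfies the DR recursion, so your uniqueness step is not directly available. Instead the paper establishes $K^{\DZ;[2]}_{2;\tu}=K^{\DR;[2]}_{2;\tu}$ by a direct coefficient-by-coefficient comparison: the $\d_x^3$ and $\d_x^2$ coefficients are matched through an explicit chase of Frobenius-manifold identities (using $\cL_E C^\alpha_{\beta\gamma}=C^\alpha_{\beta\gamma}$ and $\mu_\nu c^{\nu\mu}_\nu=0$); the $\d_x^0$ coefficient is handled by structural lemmas (Lemmas~\ref{lemma:constant terms of operators}--\ref{lemma:DRconstant}) on how the constant term of a Poisson operator behaves under Miura maps with $\tv^\alpha-v^\alpha\in\Im\,\d_x$, reducing both sides to $(\tfrac{1}{2}-\mu_\beta)\eta^{\beta\nu}\{\tu^\alpha,\cdot\}$ expressions that agree by the already-known first layer; and the $\d_x^1$ coefficient is then forced by skewsymmetry. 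From this equality and the fact that $K_2^{\DZ}$ is already Poisson and satisfies~\eqref{eq:DZ recursion} at genus~$1$ by~\cite{DZ98}, Conjecture~\ref{conjecture:DR hierarchy} at genus~$1$ is obtained as a byproduct. In short, the paper runs the implication opposite to yours: it deduces the Poisson and recursion properties of $K_2^{\DR}$ from the equality with $K_2^{\DZ}$, whereas you would deduce the equality from those properties. Your argument would give a much shorter proof of the Conjecture~\ref{conjecture:DR/DZ equivalence} part of Theorem~\ref{theorem:main} \emph{if} one had an independent genus-$1$ proof of Conjecture~\ref{conjecture:DR hierarchy}, but it does not replace the paper's computation for the full Theorem~\ref{theorem:main}.
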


\medskip

Our main result is the following theorem. 

\begin{theorem}\label{theorem:main}
Conjectures~\ref{conjecture:DR hierarchy} and~\ref{conjecture:DR/DZ equivalence} are true at the approximation up to genus $1$.
\end{theorem}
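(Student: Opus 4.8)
The plan is to reduce everything to explicit genus-one computations in the variables $u^\alpha$ and $w^\alpha$, using the known genus-one formulas already recorded in the excerpt. First I would write out $K_2^{\DR}$ to order $\eps^2$. Since $g_{\mu,0}=\frac{\delta\og}{\delta u^\mu}$ and $\og$ is given by~\eqref{eq:og up to genus 1}, one has $g_{\mu,0}=\eta_{\mu\nu}\eta^{\nu\gamma}\partial_\gamma f+O(\eps^2)$-type terms plus the explicit $\eps^2$ correction coming from $-\frac{\eps^2}{48}c^\theta_{\theta\xi}c^\xi_{\alpha\beta}u^\alpha_xu^\beta_x$; plugging into the definitions of $L_\nu(g_{\mu,0})$ and $L_\nu^1(g_{\mu,0})$ gives $K_2^{\DR;\alpha\beta}$ as a differential operator with coefficients in $\cA_u[[\eps]]$ up to $O(\eps^4)$. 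The genus-zero part should reproduce $\left(E^\gamma C_\gamma^{\alpha\beta}\right)\partial_x+\left(\tfrac12-\mu_\beta\right)C^{\alpha\beta}_\gamma u^\gamma_x$ after using the Frobenius-manifold identities (associativity $C^\alpha_{\beta\gamma}C^\gamma_{\delta\theta}=C^\alpha_{\delta\gamma}C^\gamma_{\beta\theta}$, the quasihomogeneity of $F$, and $(\mu_\alpha+\mu_\beta)\eta_{\alpha\beta}=0$); this matches the flat-pencil/second-bracket of the Dubrovin--Frobenius manifold, which is classically known to be Poisson and compatible with $\eta^{-1}\partial_x$, so Part~(1) of Conjecture~\ref{conjecture:DR hierarchy} at genus $0$ is essentially a recognition statement.

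For Part~(1) at genus $1$, I would check directly that the $\eps^2$-correction to $K_2^{\DR}$ is admissible (i.e.\ of the right form $K^{[l],\mu\nu}_j\in\cA_{u;l-j+1}$, which is immediate from the grading) and then verify skewsymmetry $\bigl(K_2^{\DR}\bigr)^\dagger=-K_2^{\DR}$ and the Jacobi identity order by order in $\eps$. Skewsymmetry up to $O(\eps^4)$ is a short computation using $\partial_\nu g_{\mu,0}=\partial_\mu g_{\nu,0}$ (which follows from $g_{\mu,0}=\delta\og/\delta u^\mu$ being a variational derivative) and $(\mu_\alpha+\mu_\beta)\eta_{\alpha\beta}=0$. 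The Jacobi identity is the heart of the matter; rather than expanding the Schouten bracket by brute force, I would instead prove Part~(1) \emph{together with} Theorem~\ref{theorem:main} by transporting the operator to the $w$-coordinates via the Miura transformation $\tu^\alpha(w^*_*,\eps)$ of the excerpt: using the genus-one DR/DZ equivalence already proved in~\cite{BDGR18} (see Remark~\ref{remark:remark about semisimplicity in genus 1}), the Hamiltonians $\og_{\alpha,d}$ and the first bracket $K_1^{\DR}$ go to the DZ ones, so it suffices to show that $\bigl(K_2^{\DR}\bigr)_\tu$ equals $K_2^{\DZ}$ up to $O(\eps^4)$. Since $K_2^{\DZ}$ is \emph{known} to be Poisson and compatible with $K_1^{\DZ}$ at genus one by~\cite{DZ98}, this simultaneously yields Poissonicity, compatibility, the bihamiltonian recursion (which transports from the DZ recursion~\eqref{eq:DZ recursion} because the Hamiltonians and both brackets match), and Conjecture~\ref{conjecture:DR/DZ equivalence} at genus one.

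Concretely, the remaining computation is: (i) compute $\tu^\alpha=w^\alpha+\eta^{\alpha\nu}\partial_x\{\cP,\oh_{\nu,0}\}_{K_1^\DZ}$ to order $\eps^2$ using $\cP=-\eps^2 G(w)+O(\eps^4)$, getting an explicit $\tu^\alpha=w^\alpha+\eps^2(\dots)+O(\eps^4)$; (ii) apply the standard Miura-transformation change-of-variables formula for a Poisson operator, $\bigl(K_2^{\DR}\bigr)_\tu^{\alpha\beta}=\sum\frac{\partial\tu^\alpha}{\partial u^\mu_p}\partial_x^p\circ K_2^{\DR;\mu\nu}\circ\sum(-\partial_x)^q\circ\frac{\partial\tu^\beta}{\partial u^\nu_q}$, expanded to $O(\eps^2)$; (iii) match against the displayed genus-one formula for $K_2^{\DZ;\alpha\beta}$, where the genus-zero parts agree automatically (both are the second Poisson structure of the same Frobenius manifold) and the $\eps^2$-terms must be checked to cancel. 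The main obstacle I anticipate is step (iii): the $\eps^2$-part of $\bigl(K_2^{\DR}\bigr)_\tu$ receives contributions both from the $\eps^2$-correction inside $K_2^{\DR}$ (through the $c^\theta_{\theta\xi}c^\xi_{\alpha\beta}$ term in $\og$) and from conjugating the genus-zero part by the $\eps^2$-part of the Miura map, and showing these combine to zero will require careful use of the Frobenius-manifold identities together with the defining property~\eqref{eq:property of Fred} of $\cF^\red$ (equivalently, properties of $G=\mcF_1|_{t^*_{\ge1}=0}$ such as the genus-one topological recursion / the relation $G_{\alpha\beta\gamma}$ to $c^\theta_{\theta\xi}c^\xi_{\alpha\beta}$-type quantities). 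Once this cancellation is verified, uniqueness of $K_2^{\DZ}$~(\cite[Section~6]{BPS12b}) closes the argument.
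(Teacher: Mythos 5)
Your proposal follows essentially the same route as the paper: both reduce the whole theorem to showing $K^{\DZ}_{2;\tu}=K^{\DR}_{2;\tu}+O(\eps^4)$, inherit Poissonicity, compatibility and the bihamiltonian recursion for $K_2^{\DR}$ from the genus-one DZ results of \cite{DZ98}, and invoke the genus-one DR/DZ equivalence of \cite{BDGR18} for the Hamiltonians and first brackets. The only difference is tactical: where you plan a single brute-force comparison of all the $\eps^2$-terms, the paper computes directly only the $\d_x^3$- and $\d_x^2$-coefficients, handling the $\d_x^0$-coefficient via a lemma on constant terms of operators under Miura transformations with $\tu^\alpha-u^\alpha\in\Im\,\d_x$ and the $\d_x^1$-coefficient via skewsymmetry, thereby avoiding the need for the full one-loop formula of the second DZ bracket.
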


\medskip

The proof will be given in Section~\ref{section:proof of main theorem}. Together with Conjecture~\ref{conjecture:DZ hierarchy}, which was proved in~\cite{DZ98} at the approximation up to genus~$1$, the theorem gives a full understanding of the bihamiltonian structures of the DR and the DZ hierarchies and their relation at the approximation up to genus~$1$ for an arbitrary homogeneous CohFT. 

\medskip

\section{Extending the space of differential polynomials by tame rational functions}

Before proving Theorem~\ref{theorem:main} in the next section, let us present several technical lemmas.

\medskip

Conjecture~\ref{conjecture:DZ hierarchy} is not proved at the moment, but a weaker version is true if we extend the space of differential polynomials. Following~\cite[Section~7.3]{BDGR20} consider formal variables $v^1,\ldots,v^N$ and, for $d\in\mbZ$, denote by~$\cA^{\rt}_{v;d}$ the vector space spanned by expressions of the form
\begin{gather}\label{eq:rational function}
\sum_{i\ge m}\frac{P_i(v^*_*)}{(v^1_x)^i},
\end{gather}
where $m\in\mbZ$, $P_i\in\cA_{v;d+i}$ and $\frac{\d P_i}{\d v^1_x}=0$. Let $\cA^{\rt}_v:=\bigoplus_{d\in\mbZ}\cA^{\rt}_{v;d}$. Define also the extended space $\hcA^\rt_v:=\cA^\rt_v[[\eps]]$.

\medskip

A rational function~\eqref{eq:rational function} is called \emph{tame} if there exists a nonnegative integer~$C$ such that $\frac{\d P_i}{\d v^\alpha_k}=0$ for $k>C$. The subspace of tame elements in~$\cA^\rt_v$ will be denoted by $\cA^{\rt,\t}_v\subset\cA^\rt_v$. We also introduce the extended space $\hcA^{\rt,\t}_v:=\cA^{\rt,\t}_v[[\eps]]$. A \emph{rational Miura transformation} is a change of variables $v^\alpha\mapsto \tv^\alpha(v^*_*,\eps)$ of the form $\tv^\alpha(v^*_*,\eps)=v^\alpha+\eps f^\alpha(v^*_*,\eps)$, where $f^\alpha\in\hcA^{\rt,\t}_{v;1}$. 

\medskip

Introduce formal power series $v^{\top;\alpha}:=\eta^{\alpha\mu}\frac{\d^2\mcF_0}{\d t^\mu_0\d t^\un_0}$ and $v^{\top;\alpha}_n:=\frac{\d^n}{(\d t^\un_0)^n}v^{\top;\alpha}$. Note that the map $\hcA^{\rt,t}_v\to\mbC[[t^*_*,\eps]]$ given by
$$
\hcA^{\rt,t}_v\ni f\mapsto f|_{v^\gamma_c=v^{\top;\gamma}_c}\in\mbC[[t^*_*,\eps]]
$$
is injective \cite[Section~7.3]{BDGR20}. By~\cite[Proposition~7.6]{BDGR20} there exists a unique tame rational function $w^\alpha(v^*_*,\eps)\in\hcA^{\rt,t}_{v;0}$ such that $w^\alpha(v^{\top;*}_*,\eps)=w^{\top;\alpha}$. We also have $w^\alpha(v^*_*,\eps)-v^\alpha\in\Im\,\d_x$ (see, e.g., \cite[proof of Lemma~20]{BPS12b}). Note that the same proof as the proof of Proposition~7.6 in~\cite{BDGR20} also gives that there exists a unique tame rational function $\Omega_{\alpha,a;\beta,b}\in\hcA^{\rt,\t}_{w;0}$ such that equation~\eqref{eq:F and Omega} is true.

\medskip

Regarding Poisson operators, consider more general operators $K^{\mu\nu}=\sum_{j\ge 0}K^{\mu\nu}_j\d_x^j$ where $K^{\mu\nu}_j\in\hcA^{\rt,\t}_{w;-j+1}$. The space of such Poisson operators will be denoted by $\cPO_w^\rt$. Let $K_1^\DZ,K_2^\DZ\in\cPO_w^\rt$ be Poisson operators obtained from the operators 
$$
\eta^{\alpha\beta}\d_x \quad \text{and}\quad \left.\left(E^\gamma C_\gamma^{\alpha\beta}\right)\right|_{t^*=v^*}\d_x+\left(\frac{1}{2}-\mu_\beta\right)\left.\left(C^{\alpha\beta}_\gamma\right)\right|_{t^*=v^*}v^\gamma_x,
$$
respectively, by the rational Miura transformation $v^\alpha\mapsto w^\alpha(v^*_*,\eps)$. With this definition, the relations~\eqref{eq:K1 and Omega} and~\eqref{eq:DZ recursion} are true (see, e.g., \cite[Section~7.3]{BDGR20}). 

\medskip

Therefore, equivalently, Conjecture~\ref{conjecture:DZ hierarchy} says that $\Omega_{\alpha,a;\beta,b}\in\hcA_{w;0}$ and $K_1^\DZ,K_2^\DZ\in\cPO_w$.

\medskip

\begin{lemma}\label{lemma:constant terms of operators}
Consider a Poisson operator $K\in\cPO^\rt_v$ and a rational Miura transformation $v^\alpha\mapsto\tv^\alpha(v^*_*,\eps)$ such that $\tv^\alpha(v^*_*,\eps)-v^\alpha\in\Im\,\d_x$. Then we have $K^{\alpha\beta}_{\tv;0}=\sum_{m\ge 0}\frac{\d\tv^\alpha}{\d v^\rho_m}\d_x^m K_0^{\rho\beta}$.
\end{lemma}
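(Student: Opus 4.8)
The plan is to compute the degree-zero part of the transformed operator directly from the transformation rule for a Poisson operator under a (rational) Miura transformation and then exploit the hypothesis $\tv^\alpha - v^\alpha \in \Im\,\d_x$. Recall that under a change of variables $v\mapsto\tv(v^*_*,\eps)$ a bracket-defining operator transforms tensorially as
\begin{gather*}
K^{\alpha\beta}_{\tv} = \sum_{m,n\ge 0}\frac{\d\tv^\alpha}{\d v^\rho_m}\,\d_x^m\circ K^{\rho\sigma}\circ(-\d_x)^n\circ\frac{\d\tv^\beta}{\d v^\sigma_n},
\end{gather*}
where on the right-hand side everything is re-expressed in the variables $\tv$ via the inverse transformation at the end. (This is the standard formula, valid verbatim in $\cPO^\rt_v$ since rational Miura transformations act on $\hcA^{\rt,\t}_v$; one may cite the bracket formula from the preliminaries in~\cite{BRS21}.) First I would isolate the term of $\d_x$-degree $0$ in this expression. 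Writing $K^{\rho\sigma} = \sum_{j\ge 0}K^{\rho\sigma}_j\d_x^j$, the operator $\d_x^m\circ K^{\rho\sigma}\circ(-\d_x)^n\circ\frac{\d\tv^\beta}{\d v^\sigma_n}$ has a nonzero zeroth-order (multiplication-operator) component only when it does not contain a surviving $\d_x$ acting all the way to the right on a function; collecting these contributions is a finite bookkeeping.

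The key simplification is the hypothesis $\tv^\beta - v^\beta \in \Im\,\d_x$, equivalently $\frac{\delta}{\delta v^\gamma}(\tv^\beta - v^\beta) = 0$ for all $\gamma$. Translating the variational-derivative vanishing into the operator language, this says precisely $\sum_{n\ge 0}(-\d_x)^n\circ\frac{\d\tv^\beta}{\d v^\gamma_n} = \sum_{n\ge 0}(-\d_x)^n\circ\frac{\d v^\beta}{\d v^\gamma_n} = \delta^\beta_\gamma$ (the formal adjoint of the Jacobi-type operator $L_\gamma(\tv^\beta)$ reduces to the identity). Consequently, when one pushes all the $\d_x$'s coming from the right factor to the right and demands that the total $\d_x$-degree be zero, the entire right-hand Jacobian block collapses: $\sum_n \d_x^{?}\big(\cdots(-\d_x)^n\frac{\d\tv^\beta}{\d v^\sigma_n}\big)$ contributes, at degree $0$, exactly $\delta^\beta_\sigma$ times the degree-$0$ part of $\d_x^m\circ K^{\rho\sigma}$, while the outer $\d_x^m$ must also have all its derivatives absorbed by the left Jacobian factor rather than landing on a function. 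The term of $K^{\rho\beta}$ contributing at degree $0$ is its own degree-$0$ part $K^{\rho\beta}_0$, and the outer block contributes $\sum_{m\ge 0}\frac{\d\tv^\alpha}{\d v^\rho_m}\d_x^m$ acting as a (differential) operator on it. This yields $K^{\alpha\beta}_{\tv;0} = \sum_{m\ge 0}\frac{\d\tv^\alpha}{\d v^\rho_m}\,\d_x^m K_0^{\rho\beta}$, as claimed. (I would be careful here that $\d_x^m K_0^{\rho\beta}$ means $\d_x^m$ applied to the function $K_0^{\rho\beta}$, i.e.\ the zeroth-order operator of multiplication by that derivative, matching the statement.)

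The main obstacle I anticipate is purely combinatorial rather than conceptual: carefully extracting the $\d_x$-degree-$0$ component of a composition $\d_x^m\circ(\text{function})\circ\d_x^j\circ(\text{adjoint Jacobian block})$ and checking that no "hidden" contribution survives — in particular that derivatives $\d_x$ sitting between the two Jacobian blocks and $K^{\rho\sigma}_j$ for $j\ge 1$ genuinely cannot contribute to degree $0$ once the right block has been reduced to $\delta^\beta_\sigma$ using $\tv^\beta - v^\beta\in\Im\,\d_x$. One clean way to organize this is to work with the variational-derivative characterization from the start: for arbitrary $\of,\og\in\hLambda_v$ one has $\frac{\delta\of}{\delta\tv^\alpha} = \sum_m(-\d_x)^m\big(\frac{\d v^\rho_m}{\d\cdots}\cdots\big)$, but more simply, since $\tv^\alpha - v^\alpha\in\Im\,\d_x$ implies the two variables induce the same local functionals, one gets $\{\of,\og\}_{K_{\tv}}=\{\of,\og\}_K$ and hence can read off $K_{\tv}$ from $K$ by the usual transposition of one Jacobian, which is exactly the asserted formula. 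I would present the argument along whichever of these two routes keeps the index bookkeeping shortest, most likely the second, reducing the proof to a couple of lines once the variational-derivative reformulation of the hypothesis is in place.
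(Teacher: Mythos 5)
Your proposal matches the paper's proof essentially verbatim: both apply the standard transformation rule $K^{\alpha\beta}_{\tv}=\sum_{m,n}\frac{\d\tv^\alpha}{\d v^\rho_m}\d_x^m\circ K^{\rho\theta}\circ(-\d_x)^n\circ\frac{\d\tv^\beta}{\d v^\theta_n}$, use the hypothesis $\tv^\beta-v^\beta\in\Im\,\d_x$ to collapse the right-hand adjoint Jacobian block via $\frac{\delta\tv^\beta}{\delta v^\theta}=\delta^\beta_\theta$, and then read off the $\d_x^0$-coefficient of what remains. The only caveat is notational: $\sum_n(-\d_x)^n\circ\frac{\d\tv^\beta}{\d v^\gamma_n}=\delta^\beta_\gamma$ is not true as an operator identity (only its zeroth-order coefficient, the variational derivative, equals $\delta^\beta_\gamma$), but since only that coefficient enters the extraction of $\Coef_{\d_x^0}$ — as you in fact use in the subsequent degree count — the argument goes through exactly as in the paper.
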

\begin{proof}
We compute
\begin{align*} 
K^{\alpha\beta}_{\tv;0}=&\Coef_{\partial_{x}^{0}}K^{\alpha\beta}_{\tv}=\Coef_{\partial_{x}^{0}}\left(\sum_{m,n\ge 0}\frac{\partial \tv^{\alpha}}{\partial v_m^{\rho}} \partial_{x}^{m} \circ K^{\rho\theta} \circ (-\d_x)^n\circ\frac{\partial \tv^{\beta}}{\partial v_{n}^{\theta}}\right)=\\
=&\Coef_{\partial_{x}^{0}}\Bigg(\sum_{m\ge 0}\frac{\partial\tv^{\alpha}}{\partial v_m^{\rho}} \partial_{x}^{m} \circ K^{\rho\theta} \circ \underbrace{\sum_{n\ge 0}(-\d_x)^n\frac{\partial \tv^{\beta}}{\partial v_{n}^{\theta}}}_{=\frac{\delta \tv^\beta}{\delta v^\theta}=\delta^\beta_\theta}\Bigg)=\Coef_{\partial_{x}^{0}}\Bigg(\sum_{m\ge 0}\frac{\partial \tv^{\alpha}}{\partial v_m^{\rho}} \partial_{x}^{m} \circ K^{\rho\beta}\Bigg)=\\
=&\sum_{m\ge 0}\frac{\partial \tv^{\alpha}}{\partial v_m^{\rho}} \partial_{x}^{m}K^{\rho\beta}_0.
\end{align*}
\end{proof}

\medskip

\begin{lemma}\label{lemma:DZconstant}
We have $K_{2;0}^{\DZ;\alpha\beta}=\left(\frac{1}{2}-\mu_\beta\right)\eta^{\alpha\theta}\eta^{\beta\nu}\d_x\Omega_{\theta,0;\nu,0}$.
\end{lemma}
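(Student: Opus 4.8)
The claim is that the degree-$0$ (i.e.\ $\d_x^0$) coefficient of the second Dubrovin--Zhang operator is $K_{2;0}^{\DZ;\alpha\beta}=\left(\frac12-\mu_\beta\right)\eta^{\alpha\theta}\eta^{\beta\nu}\d_x\Omega_{\theta,0;\nu,0}$. The natural route is to combine Lemma~\ref{lemma:constant terms of operators} with the explicit seed form of $K_2^\DZ$ in the $v$-coordinates. Recall that $K_2^\DZ$ is obtained from the operator $\left.\left(E^\gamma C_\gamma^{\alpha\beta}\right)\right|_{t^*=v^*}\d_x+\left(\frac12-\mu_\beta\right)\left.\left(C^{\alpha\beta}_\gamma\right)\right|_{t^*=v^*}v^\gamma_x$ by the rational Miura transformation $v^\alpha\mapsto w^\alpha(v^*_*,\eps)$, and that this Miura transformation satisfies $w^\alpha(v^*_*,\eps)-v^\alpha\in\Im\,\d_x$, so Lemma~\ref{lemma:constant terms of operators} applies with $\tv=w$. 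Thus $K_{2;0}^{\DZ;\alpha\beta}=\sum_{m\ge 0}\frac{\d w^\alpha}{\d v^\rho_m}\d_x^m K^{\rho\beta}_{2,0}$, where $K^{\rho\beta}_{2,0}$ is the $\d_x^0$-coefficient of the seed operator in the $v$-variables, namely $K^{\rho\beta}_{2,0}=\left(\frac12-\mu_\beta\right)\left.\left(C^{\rho\beta}_\gamma\right)\right|_{t^*=v^*}v^\gamma_x$.

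Next I would identify $\left.\left(C^{\rho\beta}_\gamma\right)\right|_{t^*=v^*}v^\gamma_x$ with $\d_x$ of a second derivative of $\mcF_0$. Indeed $C^{\rho\beta}_\gamma=\eta^{\rho\mu}\eta^{\beta\nu}\frac{\d^3 F}{\d t^\mu\d t^\nu\d t^\gamma}$, so $\left.\left(C^{\rho\beta}_\gamma\right)\right|_{t^*=v^*}v^\gamma_x=\eta^{\rho\mu}\eta^{\beta\nu}\d_x\!\left(\left.\frac{\d^2 F}{\d t^\mu\d t^\nu}\right|_{t^*=v^*}\right)$. At genus zero the DZ two-point function is precisely $\Omega_{\mu,0;\nu,0}=\left.\frac{\d^2\mcF_0}{\d t^\mu_0\d t^\nu_0}\right|_{t^\gamma_c=\delta_{c,0}v^\gamma}=\left.\frac{\d^2 F}{\d t^\mu\d t^\nu}\right|_{t^*=v^*}$ when expressed in the $v$-variables (before the Miura transformation to $w$); more precisely, the seed $\Omega_{\mu,0;\nu,0}$ in $v$-coordinates is the classical second derivative of $F$. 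Hence in the $v$-coordinates $K^{\rho\beta}_{2,0}=\left(\frac12-\mu_\beta\right)\eta^{\rho\mu}\eta^{\beta\nu}\d_x\Omega_{\mu,0;\nu,0}$.

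Finally I would push this expression through the Miura transformation. Feeding $K^{\rho\beta}_{2,0}=\left(\frac12-\mu_\beta\right)\eta^{\rho\mu}\eta^{\beta\nu}\d_x\Omega^{v}_{\mu,0;\nu,0}$ (the $v$-coordinate two-point function) into $K_{2;0}^{\DZ;\alpha\beta}=\sum_{m\ge 0}\frac{\d w^\alpha}{\d v^\rho_m}\d_x^m K^{\rho\beta}_{2,0}$ gives, after pulling the constant $\left(\frac12-\mu_\beta\right)\eta^{\beta\nu}$ out and pulling $\eta^{\rho\mu}$ through,
\begin{gather*}
K_{2;0}^{\DZ;\alpha\beta}=\left(\tfrac12-\mu_\beta\right)\eta^{\beta\nu}\sum_{m\ge 0}\frac{\d w^\alpha}{\d v^\rho_m}\d_x^m\!\left(\eta^{\rho\mu}\d_x\Omega^{v}_{\mu,0;\nu,0}\right).
\end{gather*}
The remaining point is that $\sum_{m\ge0}\frac{\d w^\alpha}{\d v^\rho_m}\d_x^m\bigl(\eta^{\rho\mu}\d_x(\cdot)\bigr)$ is exactly the operator that transports the genus-zero first DZ Poisson operator $\eta^{\alpha\beta}\d_x$ (which is $\d_x$ composed with the metric) from $v$- to $w$-coordinates applied to $\Omega^{v}_{\cdot,0;\nu,0}$, and that under this transport $\eta^{\alpha\mu}\d_x\Omega^{v}_{\mu,0;\nu,0}$ becomes $\eta^{\alpha\theta}\d_x\Omega^{w}_{\theta,0;\nu,0}$ by the genus-zero instance of~\eqref{eq:K1 and Omega} together with the transformation law $\frac{\delta}{\delta v^\rho}=\sum_m\d_x^m\!\circ\frac{\d w^\sigma}{\d v^\rho_m}\cdot\frac{\delta}{\delta w^\sigma}$ for variational derivatives combined with the compatibility of $\Omega$ with the Miura transformation (i.e.\ $\Omega^v_{\mu,0;\nu,0}=\Omega^w_{\mu,0;\nu,0}\circ(\text{Miura})$ as functions of $v$). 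This yields $K_{2;0}^{\DZ;\alpha\beta}=\left(\frac12-\mu_\beta\right)\eta^{\alpha\theta}\eta^{\beta\nu}\d_x\Omega_{\theta,0;\nu,0}$.

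The main obstacle is the bookkeeping in the last step: correctly tracking how the differential operator $\sum_m\frac{\d w^\alpha}{\d v^\rho_m}\d_x^m$ interacts with the second factor $\d_x\Omega$, and verifying that the composition reproduces the $w$-coordinate expression rather than an artifact of the change of variables. Concretely, one must invoke that the first DZ operator transforms as $K_{1,\tv}^{\alpha\beta}=\sum_{m,n}\frac{\d\tv^\alpha}{\d v^\rho_m}\d_x^m\circ(\eta^{\rho\theta}\d_x)\circ(-\d_x)^n\circ\frac{\d\tv^\beta}{\d v^\theta_n}$ and use the Casimir/total-derivative property $\tv^\alpha-v^\alpha\in\Im\,\d_x$ to collapse the right-hand variational factor, exactly as in the proof of Lemma~\ref{lemma:constant terms of operators}; this is what licenses replacing $\frac{\delta\oh_{\nu,0}}{\delta v^\nu}$-type quantities by their $w$-versions and identifies the transported expression with $\eta^{\alpha\theta}\d_x\Omega_{\theta,0;\nu,0}$. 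Everything else is a direct substitution using the already-recalled genus-zero formulas for $K_1^\DZ$, $K_2^\DZ$ and $\Omega$.
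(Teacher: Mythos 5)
Your first two steps coincide with the paper's proof: apply Lemma~\ref{lemma:constant terms of operators} to the seed operator in the $v$-variables (using $w^\alpha(v^*_*,\eps)-v^\alpha\in\Im\,\d_x$) to get $K_{2;0}^{\DZ;\alpha\beta}=\left(\frac12-\mu_\beta\right)\sum_{m\ge 0}\frac{\d w^\alpha}{\d v^\rho_m}\d_x^m\left(\left.\left(C^{\rho\beta}_\gamma\right)\right|_{t^*=v^*}v^\gamma_x\right)$, and rewrite the constant term of the seed as $\eta^{\rho\mu}\eta^{\beta\nu}\d_x\bigl(\left.\tfrac{\d^2 F}{\d t^\mu\d t^\nu}\right|_{t^*=v^*}\bigr)$. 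The gap is in your final step. The claimed ``compatibility'' $\Omega^v_{\mu,0;\nu,0}=\Omega^w_{\mu,0;\nu,0}\circ(\text{Miura})$ is false: restricting the left-hand side to the topological solution gives $\frac{\d^2\mcF_0}{\d t^\mu_0\d t^\nu_0}$ (by the genus-zero constitutive relations), while restricting the right-hand side gives $\frac{\d^2\mcF}{\d t^\mu_0\d t^\nu_0}$ (by the defining property~\eqref{eq:F and Omega} and $w^\alpha(v^{\top;*}_*,\eps)=w^{\top;\alpha}$); these differ in positive genus, so by the injectivity of the restriction map the two rational functions are distinct. Taken literally, your compatibility claim would make the transport step a tautology and would output the genus-zero $\Omega$ rewritten in $w$, not the all-genus $\Omega_{\theta,0;\nu,0}$ that the lemma asserts --- and the appearance of the all-genus object is precisely the nontrivial content here.

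The paper closes this step differently: it restricts the whole expression $\sum_{m\ge 0}\frac{\d w^\alpha}{\d v^\rho_m}\d_x^m\left(\left.\left(C^{\rho\beta}_\gamma\right)\right|_{t^*=v^*}v^\gamma_x\right)$ to the topological solution, where the inner factor becomes $\eta^{\beta\nu}\frac{\d v^{\top;\rho}}{\d t^\nu_0}$ and the chain rule gives $\sum_{m\ge 0}\frac{\d w^\alpha}{\d v^\rho_m}\big|_{v=v^{\top}}\frac{\d v^{\top;\rho}_m}{\d t^\nu_0}=\frac{\d w^{\top;\alpha}}{\d t^\nu_0}=\eta^{\alpha\theta}\left.\d_x\Omega_{\theta,0;\nu,0}\right|_{v^*_*=v^{\top;*}_*}$; injectivity then upgrades this to an identity in $\hcA^{\rt,\t}_v$. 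It is the definition $w^{\top;\alpha}=\eta^{\alpha\mu}\frac{\d^2\mcF}{\d t^\mu_0\d t^\un_0}$, not any pointwise matching of the two-point functions, that injects the all-genus $\Omega$. Your route via Hamiltonian vector fields could in principle be repaired --- one would need to prove that the genus-zero Hamiltonian $\int\Omega^{[0]}_{\un,0;\nu,1}dx$ transported to the $w$-variables equals the all-genus $\oh_{\nu,0}$ as a local functional (their densities differ by an element of $\Im\,\frac{\d}{\d t^\un_0}$ on the topological solution) and then invoke covariance of Hamiltonian flows under Miura transformations --- but that identification is exactly the point you would have to argue, and it is not supplied by the statement you invoke. (Minor: your transformation law for variational derivatives should carry $(-\d_x)^m$, not $\d_x^m$.)
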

\begin{proof}
By Lemma~\ref{lemma:constant terms of operators} we have $K^{\DZ;\alpha\beta}_{2;0}=\left(\frac{1}{2}-\mu_\beta\right)\sum_{m\ge 0}\frac{\d w^\alpha}{\d v^\rho_m}\d_x^m\left(\left.\left(C^{\rho\beta}_\gamma\right)\right|_{t^*=v^*}v^\gamma_x\right)$. Note that $\left.\left(\left.\left(C^{\rho\beta}_\gamma\right)\right|_{t^*=v^*}v^\gamma_x\right)\right|_{v^*_*=v^{\top;*}_*}=\eta^{\beta\nu}\frac{\d v^{\top;\rho}}{\d t^\nu_0}$. Therefore,
$$
\left.\sum_{m\ge 0}\frac{\d w^\alpha}{\d v^\rho_m}\d_x^m\left(\left.\left(C^{\rho\beta}_\gamma\right)\right|_{t^*=v^*}v^\gamma_x\right)\right|_{v^*_*=v^{\top;*}_*}=\eta^{\beta\nu}\frac{\d w^{\top;\alpha}}{\d t^\nu_0}=\left.\eta^{\alpha\theta}\eta^{\beta\nu}\d_x\Omega_{\theta,0;\nu,0}\right|_{v^*_*=v^{\top;*}_*},
$$
which gives $\sum_{m\ge 0}\frac{\d w^\alpha}{\d v^\rho_m}\d_x^m\left(\left.\left(C^{\rho\beta}_\gamma\right)\right|_{t^*=v^*}v^\gamma_x\right)=\eta^{\alpha\theta}\eta^{\beta\nu}\d_x\Omega_{\theta,0;\nu,0}$, as required.
\end{proof}

\medskip

\begin{remark}
The lemma, in particular, implies that the constant term of the operator $K_2^{\DZ}$ is a differential polynomial if $\Omega_{\theta,0;\nu,0}$ is a differential polynomial, which is true in the semisimple case. This is also noticed in~\cite[Theorem~4.11]{HS21}. 
\end{remark}

\medskip

\begin{lemma}\label{lemma:DRconstant}
We have $K_{2;0}^{\DR;\alpha\beta}=\left(\frac{1}{2}-\mu_\beta\right)\eta^{\alpha\theta}\eta^{\beta\nu}\d_x\frac{\delta\og_{\nu,0}}{\delta u^\theta}$.
\end{lemma}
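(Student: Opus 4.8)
The plan is to extract the coefficient of $\d_x^0$ from the four summands in the definition~\eqref{eq:K2DR operator} of $K_2^{\DR;\alpha\beta}$ and then to rewrite the result using a standard symmetry of variational derivatives.

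First I would observe that the second, third and fourth summands on the right-hand side of~\eqref{eq:K2DR operator} — that is, $\left(\frac12-\mu_\alpha\right)L_\nu(g_{\mu,0})\circ\d_x$, $A_{\mu\nu}\d_x$ and $\d_x\circ L_\nu^1(g_{\mu,0})\circ\d_x$ — are all of the form $(\text{operator})\circ\d_x$, hence contribute nothing to the $\d_x^0$-coefficient. For the first summand I write $L_\nu(g_{\mu,0})=\sum_{i\ge0}\frac{\d g_{\mu,0}}{\d u^\nu_i}\d_x^i$; composing with $\d_x$ on the left turns the $i$-th monomial into $\left(\d_x\frac{\d g_{\mu,0}}{\d u^\nu_i}\right)\d_x^i+\frac{\d g_{\mu,0}}{\d u^\nu_i}\d_x^{i+1}$, so only the $i=0$ term produces a $\d_x^0$-contribution. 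This yields
\begin{gather*}
K_{2;0}^{\DR;\alpha\beta}=\left(\frac12-\mu_\beta\right)\eta^{\alpha\mu}\eta^{\beta\nu}\,\d_x\left(\frac{\d g_{\mu,0}}{\d u^\nu}\right).
\end{gather*}

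It then remains to check that $\frac{\d g_{\mu,0}}{\d u^\nu}=\frac{\delta\og_{\nu,0}}{\delta u^\mu}$, which is where the relation $g_{\mu,0}=\frac{\delta\og}{\delta u^\mu}$ recalled above comes in. Choosing a density $h\in\hcA_{u;0}$ with $\og=\int h\,dx$, one has $g_{\mu,0}=\sum_{k\ge0}(-\d_x)^k\frac{\d h}{\d u^\mu_k}$; since $\frac{\d}{\d u^\nu_0}$ commutes with $\d_x$ (unlike $\frac{\d}{\d u^\nu_k}$ for $k\ge1$), differentiating gives $\frac{\d g_{\mu,0}}{\d u^\nu}=\sum_{k\ge0}(-\d_x)^k\frac{\d^2 h}{\d u^\nu_0\d u^\mu_k}$. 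On the other hand, modulo $\Im\,\d_x$ one has $g_{\nu,0}\equiv\frac{\d h}{\d u^\nu_0}$ (the terms with $k\ge1$ being total $x$-derivatives), so $\og_{\nu,0}=\int\frac{\d h}{\d u^\nu_0}\,dx$ and hence $\frac{\delta\og_{\nu,0}}{\delta u^\mu}=\sum_{k\ge0}(-\d_x)^k\frac{\d^2 h}{\d u^\mu_k\d u^\nu_0}$. The two sums coincide; substituting into the displayed formula and renaming $\mu\to\theta$ yields the statement. (Conceptually, this last identity is just the $l=0$ case of the Helmholtz conditions applied to the gradient $g_{\mu,0}$.)

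I do not anticipate a real obstacle: the computation is an elementary bookkeeping of $\d_x^0$-coefficients combined with the variational symmetry above, the only points worth a second glance being that $\frac{\d}{\d u^\nu_0}$ commutes with $\d_x$ and that one may replace $g_{\nu,0}$ by any representative modulo $\Im\,\d_x$ before applying $\frac{\delta}{\delta u^\mu}$. Together with Lemma~\ref{lemma:DZconstant}, this lemma will be used to match the constant parts of $K_2^{\DR}$ and $K_2^{\DZ}$ in the proof of Theorem~\ref{theorem:main}.
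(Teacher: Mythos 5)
Your proposal is correct and follows essentially the same route as the paper, which simply asserts that the formula "directly follows from the definition"~\eqref{eq:K2DR operator}: you extract the $\d_x^0$-coefficient (only the term $\left(\tfrac12-\mu_\beta\right)\d_x\circ L_\nu(g_{\mu,0})$ contributes, via its $i=0$ summand) and then identify $\frac{\d g_{\mu,0}}{\d u^\nu}$ with $\frac{\delta\og_{\nu,0}}{\delta u^\mu}$ using $g_{\mu,0}=\frac{\delta\og}{\delta u^\mu}$ and the symmetry of second variational derivatives. The paper leaves all of this implicit; your elaboration, including the observation that $\frac{\d}{\d u^\nu_0}$ commutes with $\d_x$, is accurate.
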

\begin{proof}
This directly follows from the definition~\eqref{eq:K2DR operator}.
\end{proof}

\medskip

\section{Proof of Theorem~\ref{theorem:main}}\label{section:proof of main theorem}

If we exclude the operators $K_2^\DZ$ and $K_2^\DR$ from consideration, then the fact that the DZ hierarchy and the DR hierarchy coincide in the coordinates $\tu^\alpha$, at the approximation up to genus~$1$, was already proved in~\cite[Theorem~8.4]{BDGR18} (see also Remark~\ref{remark:remark about semisimplicity in genus 1}). Thus, it is sufficient to prove that $K^{\DZ}_{2;\tu}=K^{\DR}_{2;\tu}+O(\eps^4)$. Since we know that $K^{\DZ;[0]}_{2;\tu}=K^{\DR;[0]}_{2;\tu}$~\cite[Proposition~2.1]{BRS21}, it remains to check that
\begin{gather}\label{eq:main equation}
K^{\DZ;[2]}_{2;\tu;l}=K^{\DR;[2]}_{2;\tu;l}\quad\text{for}\quad l=0,1,2,3.
\end{gather}
We split the proof in several steps. 

\medskip 

\noindent{\it\underline{Step 1}}. Let us check~\eqref{eq:main equation} for $l=2$ and $l=3$. We will do that by direct computation. 

\medskip

We will denote $\d_\alpha:=\frac{\d}{\d u^\alpha}$ and also use the notations
$$
c^{\alpha\beta}_{\gamma\delta}:=\d_\delta c^{\alpha\beta}_\gamma,\qquad c^{\alpha\beta}_{\gamma\delta\theta}:=\d_\theta c^{\alpha\beta}_{\gamma\delta},\qquad e^\gamma:=E^\gamma|_{t^*=u^*}=(1-q_\gamma)u^\gamma+r^\gamma,\qquad g^{\alpha\beta}=e^\gamma c^{\alpha\beta}_\gamma.
$$
In \cite[Thereom~2]{DZ98} the authors obtained the following formulas:
\begin{align*}
&K^{\DZ;[2],\alpha\beta}_{2;\tu;3}=\left.h^{\alpha\beta}\right|_{u^*=\tu^*},\\
&K^{\DZ;[2],\alpha\beta}_{2;\tu;2}=\left.\left(\frac{3}{2}\d_\gamma h^{\alpha\beta}+\frac{1}{24}\left(\frac{3}{2}-\mu_\beta\right)c^{\alpha\nu}_\gamma c^{\beta\mu}_{\nu\mu}-\frac{1}{24}\left(\frac{3}{2}-\mu_\alpha\right)c^{\beta\nu}_\gamma c^{\alpha\mu}_{\nu\mu}\right)\right|_{u^*=\tu^*}\tu^\gamma_x,
\end{align*}
where 
$$
h^{\alpha\beta}=\frac{1}{12}\left(\d_\nu\left(g^{\mu\nu}c^{\alpha\beta}_\mu\right)+\frac{1}{2}c^{\mu\nu}_\nu c^{\alpha\beta}_\mu\right).
$$

\medskip

On the other hand, since $\tu^\alpha=u^\alpha+\frac{\eps^2}{24}\d_x^2 c^{\alpha\mu}_\mu+O(\eps^4)$~\cite[proof of Theorem~8.4]{BDGR18}, we have
\begin{align*}
K_{2;\tu}^{\DR;\alpha\beta}=&L_{\nu}\left(u^{\alpha}+\frac{\varepsilon^{2}}{24} \partial_{x}^{2}c_{\lambda}^{\alpha\lambda}\right) \circ K_2^{\DR;\nu\rho} \circ L_{\rho}^\dagger\left(u^{\beta}+\frac{\varepsilon^{2}}{24} \partial_{x}^{2}c_{\theta}^{\beta \theta}\right)+O(\eps^4)=\\
=&K_2^{\DR;\alpha\beta}+\frac{\varepsilon^{2}}{24}\left(\d_x^2\circ L_\nu\left(c_{\lambda}^{\alpha \lambda}\right)\circ K_2^{\DR;\nu\beta}+K_2^{\DR;\alpha\rho}\circ L_\rho^\dagger\left(c_{\theta}^{\beta\theta}\right)\circ\d_x^2\right)+O(\eps^4)=\\
=&K_2^{\DR;\alpha\beta}+\varepsilon^{2}\underbrace{\frac{1}{24}\left(\d_x^2\circ c_{\nu\lambda}^{\alpha\lambda}\circ K_2^{\DR;[0],\nu\beta}+K_2^{\DR;[0],\alpha\rho}\circ c_{\rho\theta}^{\beta\theta}\circ\d_x^2\right)}_{=:\sum_{i=0}^3 R^{\alpha\beta}_i\d_x^i}+O(\eps^4),
\end{align*}
where $R^{\alpha\beta}_i\in\cA_{u;3-i}$. Considering the expansion $g_{\mu,0}=\sum_{g\ge 0}\eps^{2g}g^{[2g]}_{\mu,0}$, $g_{\mu,0}^{[2g]}\in\cA_{u;2g}$, from~\eqref{eq:K2DR operator} we compute
\begin{align*}
&K^{\DR;[2],\alpha\beta}_{2;3}=(3-\mu_\alpha-\mu_\beta)\eta^{\alpha\mu}\eta^{\beta\nu}\frac{\d g_{\mu,0}^{[2]}}{\d u^\nu_{xx}},\\
&K^{\DR;[2],\alpha\beta}_{2;2}=\eta^{\alpha\mu}\eta^{\beta\nu}\left[\left(2-\mu_\alpha-\mu_\beta\right)\frac{\d g_{\mu,0}^{[2]}}{\d u^\nu_{x}}+\left(\frac{5}{2}-\mu_\beta\right)\d_x\frac{\d g_{\mu,0}^{[2]}}{\d u^\nu_{xx}}\right].
\end{align*}
Using~\eqref{eq:og up to genus 1} we then compute
\begin{align*}
&\frac{\d g_{\mu,0}^{[2]}}{\d u^\nu_{xx}}=\frac{1}{24}c^\theta_{\theta\xi}c^\xi_{\mu\nu},\\
&\frac{\d g_{\mu,0}^{[2]}}{\d u^\nu_x}=\frac{1}{24}\left[\d_\nu\left(c^\theta_{\theta\xi}c^\xi_{\mu\gamma}\right)+\d_\gamma\left(c^\theta_{\theta\xi}c^\xi_{\mu\nu}\right)-\d_\mu\left(c^\theta_{\theta\xi}c^\xi_{\nu\gamma}\right)\right]u^\gamma_x,
\end{align*}
and finally get
\begin{align*}
&K^{\DR;[2],\alpha\beta}_{2;3}=\frac{3-\mu_{\alpha}-\mu_{\beta}}{24}c_\gamma^{\gamma\sigma}c^{\alpha\beta}_{\sigma},\\
&K^{\DR;[2],\alpha\beta}_{2;2}=\left[\frac{2-\mu_{\alpha}-\mu _{\beta}}{24}\left(c_{\theta \xi}^{\beta\theta} c_{\gamma}^{\alpha\xi}-c_{\theta \xi}^{\alpha\theta} c_{\gamma}^{\beta\xi}\right)+\frac{\frac{9}{2}-\mu_{\alpha}-2 \mu _{\beta}}{24} \partial_{\gamma}\left(c_{\theta}^{\theta \xi} c_{\xi}^{\alpha \beta}\right)\right]u_{x}^{\gamma}.
\end{align*}
Using that $K_2^{\DR;[0],\alpha\beta}=g^{\alpha\beta}\d_x+\left(\frac{1}{2}-\mu_\beta\right)c^{\alpha\beta}_\gamma u^\gamma_x$,  we also compute
\begin{align*}
&R_{3}^{\alpha \beta}=\frac{1}{24}\left(c^{	\alpha \lambda }_{\nu \lambda} g^{\nu \beta}+g^{\alpha \nu} c_{\nu \lambda}^{\beta \lambda}\right),\\
&R_{2}^{\alpha \beta}=\frac{1}{24}\left[ 2 \partial_{\gamma}\left( c^{\alpha \lambda}_{\nu \lambda} g^{\nu \beta}\right)+c_{\nu \lambda}^{\alpha \lambda} c_{\gamma}^{\nu \beta}\left(\frac{1}{2}-\mu_{\beta}\right)+g^{\alpha \nu} c_{\gamma \nu \lambda}^{\beta \lambda} +c_{\gamma}^{\alpha \nu}\left(\frac{1}{2}-\mu_{\nu}\right) c_{\nu\lambda}^{\beta \lambda}\right]u_x^\gamma.
\end{align*}	
So, in order to prove~\eqref{eq:main equation} for $l=2$ and $l=3$, we need to check the following two equations:
\begin{align}
&\frac{1}{12}\left(\partial_{\nu}\left(g^{\nu}{ }^{\mu} c_{\mu}^{\alpha \beta}\right)+\frac{1}{2} \underline{c_{\nu}^{\mu \nu} c_{\mu}^{\alpha \beta}}\right)=\frac{3-\mu_{\alpha}-\mu_{\beta}}{24}\underline{c_\gamma^{\gamma\sigma}c_\sigma^{\alpha\beta}}+\frac{1}{24}\left(c^{\alpha\lambda}_{\nu \lambda}g^{\nu \beta}+g^{\alpha \nu} c_{\nu\lambda}^{\beta\lambda}\right),\label{eq:l=3}\\
&\frac{1}{8} \partial_{\gamma}\bigg(\partial_{\nu}\left(g^{\mu \nu} c_{\mu}^{\alpha \beta}\right)+\frac{1}{2}\underbrace{{c}_{\nu}^{\mu \nu} c_{\mu}^{\alpha \beta}}_{*}\bigg)+\frac{1}{24}\left(\frac{3}{2}-\mu_{\beta}\right) \underbrace{c_{\gamma}^{\alpha \nu} c_{\nu \mu}^{\beta \mu}}_{**}-\frac{1}{24}\left(\frac{3}{2}-\mu_{\alpha}\right) \underbrace{c_{\gamma}^{\beta \nu} c_{\nu \mu}^{\alpha \mu}}_{***}=\label{eq:l=2}\\
=&\bigg[\frac{2-\mu_{\alpha}-\mu_{\beta}}{24}\Big(\underbrace{c_{\theta \xi}^{\beta\theta} c_{\gamma}^{\alpha\xi}}_{**}-\underbrace{c_{\theta \xi}^{\alpha\theta} c_{\gamma}^{\beta\xi}}_{***}\Big)+\frac{\frac{9}{2}-\mu_{\alpha}-2 \mu _{\beta}}{24}\partial_{\gamma}\big(\underbrace{c_{\theta }^{\theta \xi} c_{\xi}^{\alpha \beta}}_{*}\big)\bigg]\notag\\
&+\frac{1}{24}\bigg[2 \partial_{\gamma}\left(c_{\nu \lambda}^{\alpha \lambda} g^{\nu \beta}\right)+\left(\frac{1}{2}-\mu_{\beta}\right)\underbrace{c_{\nu\lambda}^{\alpha \lambda} c_{\gamma}^{\nu \beta}}_{***}+g^{\alpha \nu} c_{\gamma \nu \lambda}^{\beta \lambda}+\left(\frac{1}{2}-\mu_{\nu}\right)\underbrace{c_{\gamma}^{\alpha \nu}c_{\nu \lambda}^{\beta \lambda}}_{**}\bigg].\notag
\end{align}

\medskip

Let us prove~\eqref{eq:l=3}. Collecting the underlined terms together, using that $g^{\alpha\beta}=e^\gamma c_\gamma^{\alpha\beta}$, and multiplying both sides by~$12$, we see that~\eqref{eq:l=3} is equivalent to the following equation:
\begin{gather*}
(1-q_{\nu})\underline{c_{\nu}^{\nu \mu} c_{\mu}^{\alpha \beta}}+\boxed{e^{\gamma}\d_\nu\left(c_{\gamma}^{\nu \mu}c_{\mu}^{\alpha \beta}\right)}=\frac{2-\mu_{\alpha}-\mu_{\beta}}{2}\underline{c_{\gamma}^{\gamma\sigma} c_{\sigma}^{\alpha \beta}}+\frac{1}{2}e^{\gamma}\bigl(c_{\nu \lambda}^{\alpha \lambda} c_{\gamma}^{\nu\beta}+c_{\gamma}^{\alpha \nu} c_{\nu \lambda}^{\beta \lambda}\bigr).
\end{gather*}
Collecting the underlined terms together, transforming the boxed term as $e^{\gamma}\d_\nu\left(c_{\gamma}^{\nu \mu}c_{\mu}^{\alpha \beta}\right)=e^{\gamma}\d_\nu\left(c_{\gamma}^{\beta \mu}c_{\mu}^{\alpha\nu}\right)=e^{\gamma}\left(c_{\nu\gamma}^{\beta \mu}c_{\mu}^{\alpha\nu}+c_{\gamma}^{\beta \mu}c_{\nu\mu}^{\alpha\nu}\right)$, and moving all the terms to the left-hand side, we come to the expression
\begin{align}
&\frac{\mu_{\alpha} + \mu_{\beta}-2q_{\nu}}{2}c_{\nu}^{\nu \mu}c_{\mu}^{\alpha \beta}+e^{\gamma}\left(c_{\nu \gamma}^{\beta \mu} c_{\mu}^{\alpha \nu}+\underline{c_{\gamma}^{\beta \mu} c_{\nu\mu}^{\alpha\nu}}-\frac{1}{2} \underline{c_{\nu \lambda}^{\alpha \lambda} c_{\gamma}^{\nu \beta}}-\frac{1}{2}c_{\gamma}^{\alpha \nu} c_{\nu \lambda}^{\beta \lambda}\right)=\notag\\
=&\frac{\mu_{\alpha} + \mu_{\beta}-2q_{\nu}}{2}c_{\nu}^{\nu \mu}c_{\mu}^{\alpha \beta}+\boxed{e^{\gamma}c_{\nu \gamma}^{\beta \mu} c_{\mu}^{\alpha \nu}}+e^\gamma\left(\frac{1}{2}c_{\gamma}^{\beta \mu} c_{\nu\mu}^{\alpha\nu}-\frac{1}{2}c_{\gamma}^{\alpha \nu} c_{\nu \lambda}^{\beta \lambda}\right),\label{eq:tmp1}
\end{align}
whose vanishing we have to prove. From the theory of Dubrovin--Frobenius manifolds~\cite{Dub96} we know that $\cL_E C^\alpha_{\beta\gamma}=C^\alpha_{\beta\gamma}$, where $\cL_E$ denotes the Lie derivative, which implies that
\begin{equation*}
e^{\lambda} c_{\lambda \gamma}^{\alpha \beta}=\left(\delta-q_{\alpha}-q_{\beta}+q_{\gamma}\right) c_{\gamma}^{\alpha \beta}.
\end{equation*}
Applying this to the boxed term, we see that the expression~\eqref{eq:tmp1} is equal to
\begin{align}
&\frac{q_{\alpha} + q_{\beta}-2q_{\nu}-\delta}{2}\underline{c_{\nu}^{\nu \mu}c_{\mu}^{\alpha \beta}}+(\delta-q_\beta-q_\mu+q_\nu)\underline{c_{\nu }^{\beta \mu} c_{\mu}^{\alpha \nu}}+e^\gamma\left(\frac{1}{2}c_{\gamma}^{\beta \mu} c_{\nu\mu}^{\alpha\nu}-\frac{1}{2}c_{\gamma}^{\alpha \nu} c_{\nu \lambda}^{\beta \lambda}\right)=\notag\\
=&\frac{q_{\alpha}-q _{\beta} +\delta-2 q_{\mu}}{2}c_{\nu}^{\beta\mu}c_{\mu}^{\alpha \nu}+\frac{1}{2}e^\gamma\left(\underline{c_{\gamma}^{\beta \mu} c_{\nu\mu}^{\alpha\nu}-c_{\gamma}^{\alpha \nu} c_{\nu \lambda}^{\beta \lambda}}\right).\label{eq:tmp2}
\end{align}
We transform the underlined terms as follows:
$$
c_{\gamma}^{\beta \mu} c_{\nu\mu}^{\alpha\nu}-c_{\gamma}^{\alpha \nu} c_{\nu \lambda}^{\beta \lambda}=\left(\underline{\d_\nu(c_{\gamma}^{\beta \mu} c_{\mu}^{\alpha\nu})}-c_{\nu\gamma}^{\beta \mu} c_{\mu}^{\alpha\nu}\right)-\left(\underline{\d_\lambda(c_{\gamma}^{\alpha \nu} c_{\nu}^{\beta\lambda})}-c_{\lambda\gamma}^{\alpha \nu}c_{\nu}^{\beta \lambda}\right)=c_{\mu\gamma}^{\alpha \nu}c_{\nu}^{\beta \mu}-c_{\nu\gamma}^{\beta \mu} c_{\mu}^{\alpha\nu},
$$
and therefore the expression~\eqref{eq:tmp2} is equal to
\begin{align*}
&\frac{q_{\alpha}-q _{\beta} +\delta-2 q_{\mu}}{2}c_{\nu}^{\beta\mu}c_{\mu}^{\alpha \nu}+\frac{1}{2}e^\gamma\left(c_{\mu\gamma}^{\alpha \nu}c_{\nu}^{\beta \mu}-c_{\nu\gamma}^{\beta \mu} c_{\mu}^{\alpha\nu}\right)=\\
=&\frac{q_{\alpha}-q _{\beta} +\delta-2 q_{\mu}}{2}c_{\nu}^{\beta\mu}c_{\mu}^{\alpha \nu}+\frac{\delta-q_\alpha-q_\nu+q_\mu}{2}c_{\mu}^{\alpha \nu}c_{\nu}^{\beta \mu}-\frac{\delta-q_\beta-q_\mu+q_\nu}{2}c_{\nu}^{\beta \mu} c_{\mu}^{\alpha\nu}=\\
=&-\mu_\nu c_{\nu}^{\beta \mu}c_{\mu}^{\alpha\nu}=-\mu_\nu c_{\nu}^{\nu \mu}c_{\mu}^{\alpha\beta}. 
\end{align*}
It is sufficient to check that $\mu_{\nu}c_{\alpha\nu}^{\nu}=0$ for any $\alpha$. Indeed, we compute $X:=\mu_{\nu}c_{\alpha \nu}^{\nu} =\mu_\nu\eta ^{\nu\lambda} \eta _{\nu \theta} c^{\theta}_{\alpha \lambda}=- \mu_{\lambda } \eta ^{ \nu \lambda } \eta _{\nu \theta } c^{\theta} _{\alpha\lambda}= - \mu_{\lambda} c^{\lambda}_{ \alpha \lambda}=-X$, which implies that $X=0$, as required.

\medskip

Let us now prove equation~\eqref{eq:l=2}. Collecting together like terms we come to the equivalent equation
\begin{align*} 
&\frac{1}{8} \partial_{\gamma}\partial_{\nu}\left(g^{\mu \nu} c_{\mu}^{\alpha \beta}\right)+\frac{\mu_\alpha+\mu_\nu-1}{24}c_{\gamma}^{\alpha \nu} c_{\nu \mu}^{\beta \mu}=\frac{3-\mu_{\alpha}-2 \mu _{\beta}}{24} \partial_{\gamma}\left(c_{\nu}^{\nu \mu} c_{\mu}^{\alpha \beta}\right)+\frac{1}{24}\left[2 \partial_{\gamma}\left(c_{\nu \lambda}^{\alpha \lambda} g^{\nu \beta}\right)+g^{\alpha \nu} c_{\gamma \nu \lambda}^{\beta \lambda}\right],
\end{align*}
which is equivalent to
\begin{align*} 
&\frac{1-\frac{\delta}{2}-\mu_\nu}{8}\underbrace{\partial_{\gamma} \left(c_{\nu }^{\nu \mu} c_{\mu}^{\alpha \beta}\right)}_*+\frac{1}{8} \partial_{\gamma}\left(e^{\theta} \partial_{\nu}\left(c_{\theta}^{\mu \nu} c_{\mu}^{\alpha \beta}\right)\right)+\frac{\mu_\alpha+\mu_\nu-1}{24}c_{\gamma}^{\alpha \nu} c_{\nu \mu}^{\beta \mu}+\frac{\mu_{\alpha}+2 \mu _{\beta}-3}{24} \underbrace{\partial_{\gamma}\left(c_{\nu}^{\nu \mu} c_{\mu}^{\alpha \beta}\right)}_*\\
&-\frac{1}{24} \Big[ 2 \partial_{\gamma}\left(c_{\nu \lambda}^{\alpha \lambda} g^{\nu \beta}\right)+g^{\alpha \nu} c_{\gamma \nu \lambda}^{\beta \lambda}\Big]=0.	
\end{align*}
Using that $\mu_\nu c^{\nu\mu}_\nu=0$ we see that the left-hand side is equal to
\begin{align*} 
&\frac{1}{8}\partial_{\gamma}\left(e^{\theta} \partial_{\nu}\left(c_{\theta}^{\mu \nu} c_{\mu}^{\alpha \beta}\right)\right)+\frac{\mu_\alpha+\mu_\nu-1}{24}c_{\gamma}^{\alpha \nu} c_{\nu \mu}^{\beta \mu}+\frac{q_{\alpha}+2 q _{\beta}-3\delta}{24} \partial_{\gamma}\left(c_{\nu}^{\nu \mu} c_{\mu}^{\alpha \beta}\right)-\frac{1}{24}\left[2 \partial_{\gamma}\left({c}_{\nu \lambda}^{\alpha \lambda} g^{\nu \beta}\right)+g^{\alpha \nu} c_{\gamma \nu \lambda}^{\beta \lambda}\right].
\end{align*}
Transforming the first term in this expression as
\begin{align*}
&\partial_{\gamma}\left(e^{\theta} \partial_{\nu}\left(c_{\theta}^{\mu \nu} c_{\mu}^{\alpha \beta}\right)\right)=\partial_{\gamma}\left(e^{\theta} \partial_{\nu}\left(c_{\theta}^{\mu \beta} c_{\mu}^{\alpha \nu}\right)\right)=\partial_{\gamma}\left(e^{\theta} c_{\theta\nu}^{\mu \beta} c_{\mu}^{\alpha \nu}\right)+\partial_{\gamma}\left(e^{\theta}c_{\theta}^{\mu \beta} c_{\mu\nu}^{\alpha \nu}\right)=\\
=&(\delta-q_{\mu}-q_{\beta}+q_{\nu})\partial_{\gamma}\left(c_{\nu}^{\mu \beta} c_{\mu}^{\alpha \nu}\right)+\partial_{\gamma}\left(g^{\mu\beta} c_{\mu\nu}^{\alpha \nu}\right)=\\
=&(\delta-q_{\beta})\partial_{\gamma}\left(c_{\nu}^{\mu \beta} c_{\mu}^{\alpha \nu}\right)+\underbrace{(-q_{\mu}+q_{\nu})\partial_{\gamma}\left(c_{\nu}^{\mu \beta} c_{\mu}^{\alpha \nu}\right)}_{=0}+\partial_{\gamma}\left(g^{\mu\beta} c_{\mu\nu}^{\alpha \nu}\right)=\\
=&(\delta-q_{\beta})\partial_{\gamma}\left(c_{\nu}^{\mu \beta} c_{\mu}^{\alpha \nu}\right)+\partial_{\gamma}\left(g^{\mu\beta} c_{\mu\nu}^{\alpha \nu}\right),
\end{align*}
we come to the expression
\begin{align*} 
&\frac{\delta-q_{\beta}}{8}\underline{\partial_{\gamma}\left(c_{\nu}^{\mu\beta} c_{\mu}^{\alpha \nu}\right)}+\frac{1}{8}\boxed{\partial_{\gamma}\left(g^{\mu \beta} c_{\mu \nu}^{\alpha \nu}\right)}+\frac{\mu_\alpha+\mu_\nu-1}{24}c_{\gamma}^{\alpha \nu} c_{\nu \mu}^{\beta \mu}+\frac{q_{\alpha}+2 q _{\beta}-3\delta}{24} \underline{\partial_{\gamma}\left(c_{\nu}^{\nu \mu} c_{\mu}^{\alpha \beta}\right)}\\
&-\frac{1}{24}\left[2\boxed{\partial_{\gamma}\left({c}_{\nu \lambda}^{\alpha \lambda} g^{\nu \beta}\right)}+g^{\alpha \nu} c_{\gamma \nu \lambda}^{\beta \lambda}\right]=\\
=&\frac{\mu_\alpha-\mu_{\beta}}{24}\partial_{\gamma}\left(c_{\nu}^{\mu\beta} c_{\mu}^{\alpha \nu}\right)+\frac{\mu_\alpha+\mu_\nu-1}{24}c_{\gamma}^{\alpha \nu} c_{\nu \mu}^{\beta \mu}+\frac{1}{24}\left[\underline{\partial_{\gamma}\left({c}_{\nu \lambda}^{\alpha \lambda} g^{\nu \beta}\right)}-g^{\alpha \nu} c_{\gamma \nu \lambda}^{\beta \lambda}\right].
\end{align*}
Applying to the underlined term the formula
\begin{gather*} 
\d_\gamma g^{\nu\beta}=\left(1-q_{\gamma}\right)c_{\gamma}^{\nu \beta}+e^{\theta} c_{\theta \gamma}^{\nu \beta}=\left(1-\mu_{\nu}-\mu_{\beta}\right)c_{\gamma}^{\nu \beta},
\end{gather*}
we obtain
\begin{align*}
&\frac{\mu_\alpha-\mu_{\beta}}{24}\partial_{\gamma}\left(c_{\nu}^{\mu\beta} c_{\mu}^{\alpha \nu}\right)+\frac{\mu_\alpha+\mu_\nu-1}{24}c_{\gamma}^{\alpha \nu} c_{\nu \mu}^{\beta \mu}+\dfrac{1}{24}\left[c_{\nu \lambda \gamma}^{\alpha \lambda} g^{\nu \beta}+\left(1-\mu_{\nu}-\mu_{\beta}\right) c_{\nu \lambda}^{\alpha \lambda} c_{\gamma}^{\nu \beta}-g^{\alpha \nu} c^{ \beta \lambda }_{\gamma \nu \lambda}\right]=\\
=&\frac{\mu_\alpha-\mu_{\beta}}{24}\partial_{\gamma}\left(c_{\nu}^{\mu\beta} c_{\mu}^{\alpha \nu}\right)+\frac{\mu_\alpha+\mu_\nu-1}{24}c_{\gamma}^{\alpha \nu} c_{\nu \mu}^{\beta \mu}+\frac{1-\mu_{\nu}-\mu_{\beta}}{24}c_{\nu \lambda}^{\alpha \lambda} c_{\gamma}^{\nu \beta}+\dfrac{1}{24}\left[\underline{c_{\nu \lambda \gamma}^{\alpha \lambda} g^{\nu \beta}-g^{\alpha \nu} c^{ \beta \lambda }_{\gamma \nu \lambda}}\right].
\end{align*}
Expressing the underlined terms as follows:
\begin{align*}
&\d_\lambda\d_\gamma\underbrace{\left(c_\nu^{\alpha \lambda} g^{\nu \beta}-g^{\alpha \nu} c^{ \beta \lambda }_\nu \right)}_{=0}-c_{\nu \lambda }^{\alpha \lambda}\d_\gamma g^{\nu \beta}-c_{\nu\gamma}^{\alpha \lambda}\d_\lambda g^{\nu \beta}-c_{\nu}^{\alpha \lambda} \d_\lambda\d_\gamma g^{\nu \beta}+\d_\gamma g^{\alpha \nu} c^{ \beta \lambda }_{\nu \lambda}+\d_\lambda g^{\alpha \nu} c^{ \beta \lambda }_{\nu\gamma}+\d_\lambda\d_\gamma g^{\alpha \nu} c^{ \beta \lambda }_{\nu}=\\
=&\left(\mu_{\beta}+\mu_{\nu}-\mu_{\lambda}-\mu_{\alpha}\right)\d_\gamma\left(c_{\nu}^{\alpha \lambda}c_\lambda^{\beta \nu}\right)+\left(\mu_{\beta}+\mu_{\nu}-1\right)c^{\alpha \lambda}_{\nu \lambda}c_{\gamma}^{\beta \nu}+\left(1-\mu_{\alpha}-\mu_{\nu}\right)c_{\gamma}^{\alpha \nu} c_{\nu \lambda}^{\beta \lambda},
\end{align*}
we obtain
\begin{align*}
&\frac{\mu_\alpha-\mu_{\beta}}{24}\underbrace{\partial_{\gamma}\left(c_{\nu}^{\mu\beta} c_{\mu}^{\alpha \nu}\right)}_*+\frac{\mu_\alpha+\mu_\nu-1}{24}\underbrace{c_{\gamma}^{\alpha \nu} c_{\nu \mu}^{\beta \mu}}_{**}+\frac{1-\mu_{\nu}-\mu_{\beta}}{24}\underbrace{c_{\nu \lambda}^{\alpha \lambda} c_{\gamma}^{\nu \beta}}_{***}\\
&+\frac{\mu_{\beta}+\mu_{\nu}-\mu_{\lambda}-\mu_{\alpha}}{24}\underbrace{\d_\gamma\left(c_{\nu}^{\alpha \lambda}c_{\lambda}^{\beta \nu}\right)}_{*}+\frac{\mu_{\beta}+\mu_{\nu}-1}{24}\underbrace{c^{\alpha \lambda}_{\nu \lambda}c_{\gamma}^{\beta \nu}}_{***}+\frac{1-\mu_{\alpha}-\mu_{\nu}}{24}\underbrace{c_{\gamma}^{\alpha \nu} c_{\nu \lambda}^{\beta \lambda}}_{**}=\\
=&\frac{\mu_{\nu}-\mu_{\lambda}}{24}\d_\gamma\left(c_{\nu}^{\alpha \lambda}c_{\lambda}^{\beta \nu}\right)=0,
\end{align*}
as required.

\medskip

\noindent{\it\underline{Step 2}}. Let us prove~\eqref{eq:main equation} for $l=0$. Note that by the definition $\tu^\alpha(w^*_*,\eps)-w^\alpha\in\Im\,\d_x$. Therefore, Lemmas~\ref{lemma:constant terms of operators} and~\ref{lemma:DZconstant} imply that
$$
K_{2;\tu;0}^{\DZ;\alpha\beta}=\left(\frac{1}{2}-\mu_\beta\right)\eta^{\beta\nu}\sum_{m\ge 0}\frac{\d \tu^\alpha}{\d w^\rho_m}\eta^{\rho\theta}\d_x^{m+1}\Omega_{\theta,0;\nu,0}=\left(\frac{1}{2}-\mu_\beta\right)\eta^{\beta\nu}\sum_{m\ge 0}\left\{\tu^\alpha,\oh_{\nu,0}\right\}_{K_1^\DZ}.
$$
Lemmas~\ref{lemma:constant terms of operators} and~\ref{lemma:DRconstant} together with the fact that $\tu^\alpha(u^*_*,\eps)-u^\alpha\in\Im\,\d_x$ \cite[Lemma~7.1]{BDGR18} imply that
$$
K_{2;\tu;0}^{\DR;\alpha\beta}=\left(\frac{1}{2}-\mu_\beta\right)\eta^{\beta\nu}\sum_{m\ge 0}\frac{\d \tu^\alpha}{\d u^\rho_m}\eta^{\rho\theta}\d_x^{m+1}\frac{\delta\og_{\nu,0}}{\delta u^\theta}=\left(\frac{1}{2}-\mu_\beta\right)\eta^{\beta\nu}\sum_{m\ge 0}\left\{\tu^\alpha,\og_{\nu,0}\right\}_{K_1^\DR}.
$$
Since, in the coordinates~$\tu^\alpha$ and at the approximation up to~$\eps^2$, the local functionals $\oh_{\alpha,a}$ coincide with the local functionals $\og_{\alpha,a}$ and the Poisson operator~$K_1^\DZ$ coincide with the Poisson operator~$K_1^\DR$, we obtain $K_{2;\tu;0}^{\DZ;\alpha\beta}=K_{2;\tu;0}^{\DR;\alpha\beta}+O(\eps^4)$, as required.

\medskip

\noindent{\it\underline{Step 3}}. Let us finally prove that $K^{\DZ;[2]}_{2;\tu}=K^{\DR;[2]}_{2;\tu}$. Since we have proved~\eqref{eq:main equation} for $l=0,2,3$, the difference $K^{\DZ;[2]}_{2;\tu}-K^{\DR;[2]}_{2;\tu}$ has the form $R\d_x$, $R=(R^{\alpha\beta})$, where $R^{\alpha\beta}\in\cA_{\tu;2}$. Since the operators $K^{\DZ;[2]}_{2;\tu}$ and $K^{\DR;[2]}_{2;\tu}$ are skewsymmetric, we have
$$
(R\d_x)^\dagger=-R\d_x \quad \Leftrightarrow \quad R^T=R \text{ and } \d_x R=0.
$$
The property $\d_x R=0$ immediately implies that $R=0$, which completes the proof of the theorem.

\medskip

\end{document}